\newtheorem{theorem}{Theorem}
\newtheorem{lemma}[theorem]{Lemma}
\newtheorem{remark}[theorem]{Remark}
\newtheorem{rem}{Remark}[section]
\begin{document}
\title{Source-Channel Coding under Energy, Delay and Buffer Constraints }
\author{\IEEEauthorblockN{Oner Orhan\textsuperscript{1}, Deniz G{\"u}nd{\"u}z\textsuperscript{2}, and Elza Erkip\textsuperscript{1}}

\IEEEauthorblockA{\textsuperscript{1}Dept. of ECE, NYU Polytechnic School of Engineering Brooklyn, NY, USA\\
\textsuperscript{2}Imperial College London, London, UK}
}
\maketitle
\renewcommand{\thefootnote}{\fnsymbol{footnote}}
\footnotetext{This work was presented in part at the IEEE International Symposium on Information Theory, Istanbul, Turkey, Jul. 2013.}

\begin{abstract}
Source-channel coding for an energy limited wireless sensor node is investigated. The sensor node observes independent Gaussian source samples with variances changing over time slots and transmits to a destination over a flat fading channel. The fading is constant during each time slot. The compressed samples are stored in a finite size data buffer and need to be delivered in at most $d$ time slots. The objective is to design optimal transmission policies, namely, optimal power and distortion allocation, over the time slots such that the average distortion at destination is minimized. In particular, optimal transmission policies with various energy constraints are studied. First, a battery operated system in which sensor node has a finite amount of energy at the beginning of transmission is investigated. Then, the impact of energy harvesting, energy cost of processing and sampling are considered. For each energy constraint, a convex optimization problem is formulated, and the properties of optimal transmission policies are identified. For the strict delay case, $d=1$, $2D$ waterfilling interpretation is provided. Numerical results are presented to illustrate the structure of the optimal transmission policy, to analyze the effect of delay constraints, data buffer size, energy harvesting, processing and sampling costs.
\end{abstract}

\section{Introduction}\label{intro}

Wireless sensor nodes measure physical phenomena, compress their measurements and transmit the compressed data to a destination such that the reconstruction distortion at the destination is minimized subject to delay constraints. Various components of a wireless sensor node consume energy, including sensing, processing and communications modules. The small size and low cost of typical sensors impose restrictions on the available energy, size of the battery and data buffers, and efficiency of sensing and transmission circuity. When the variation of the physical environment and the communication channel are also considered, the optimum management of available energy is essential to ensure minimal reconstruction distortion at the destination under limited resources.

We consider a wireless sensor node that collects samples of a Gaussian source and delivers them to a destination. To model the time-varying nature of the source and the channel, we consider a time slotted system such that the source variance and the channel power gain remain constant within each time slot that spans $n$ uses of the channel. We assume that the source samples arrive at the beginning of each time slot and need to be delivered within $d$ time slots. The data buffer, which stores the compressed samples, has finite capacity. We first assume that the sensor node is run by a battery and energy is only consumed for data transmission. Our goal is to identify the optimal power and compression rate/distortion allocation over a finite time horizon such that the average distortion at the destination is minimized. This problem is formulated under the offline optimization framework, that is, we assume that the sensor node knows all the source variances and channel gains of time slots a priori. We show that this problem can be cast into the convex optimization framework which allows us to identify the necessary and sufficient conditions for the optimal power and distortion allocation. For the special case of strict delay constraints, i.e., $d=1$, we show that the optimal strategy has a {\em two-dimensional (2D) waterfilling} interpretation.

We then extend the above model to study various energy constraints on the sensor node. First, we investigate energy harvesting, and consider a model in which energy arrives (or becomes available) at the beginning of each time slot. Then, we concentrate on various sources of energy consumption in the sensor such as the operation of transmitter circuitry (digital-to-analog converters, mixers, filters) and the sensing components (source acquisition, sampling, quantization, and compression). We model the former energy cost by the processing cost $\epsilon_p$  Joules per channel use, and the latter by the sampling cost $\epsilon_s$ Joules per sample. We consider that these energy costs are constant and independent of the transmission power. The offline optimization framework retains its convexity under energy harvesting, processing and sampling costs. Accordingly, we identify properties of the optimal power and distortion allocation when the processing and sampling costs are considered.

In recent years optimal energy management polices for source-channel coding has received significant attention. Optimal energy allocation to minimize total distortion using uncoded analog transmission is investigated in \cite{Poor}, \cite{limmane}. In \cite{Poor}, the total distortion is minimized under power constraint by using a best linear unbiased estimator at the fusion center. In \cite{limmane}, distortion minimization for energy harvesting wireless nodes under finite and infinite energy storage is studied for both causal and non-causal side information about channel gains and energy arrivals. For separate source and channel coding in an energy harvesting transmitter, optimal energy allocation is investigated in \cite{osvaldo}-\cite{xi}. In \cite{osvaldo}, compression and transmission rates are jointly optimized for stochastic energy arrivals taking into consideration the energy used for source compression. The work in \cite{Matz} extends results in \cite{osvaldo} to incorporate battery and memory constraints. Our previous work \cite{oner} considers delay limited transmission of a time varying Gaussian source over a fading channel with infinite memory size. The problem of sensing and transmission for parallel Gaussian sources for a battery operated transmitter with processing and sensing costs is studied in \cite{xi}. In \cite{Seyedi}, maximization of the number of samples delivered with only the sampling cost is studied.

There is also a rich literature on energy harvesting transmission policies for throughput optimization ignoring the source coding aspects, such as \cite{deniz3}-\cite{deniz2}, \cite{elza}-\cite{elza2}. In \cite{deniz3}, overview of recent developments in the energy harvesting transmission policies is provided. In \cite{Yang2012}, Yang and Ulukus investigate offline throughput maximization and transmission completion time minimization problems over a constant channel. The throughput maximization problems for single fading link \cite{fade}-\cite{Ho}, broadcast \cite{broad} and multiple access channels \cite{multi} have also been studied. In \cite{deniz2}, an energy harvesting system is studied under  battery constraints, such as battery leakage and limited size. In short range communications, as in wireless sensor networks, sensing and processing cost can be comparable to transmission cost \cite{Cui}, \cite{Asanovic}. Recently, the effect of processing cost on the throughput maximizing policies are studied for parallel Gaussian channels in \cite{glue}, and in the energy harvesting scenario, for a single-link in \cite{elza}-\cite{Nossek}, and for a broadband channel in \cite{elza2}.

The paper is organized as follows. In the next section, we describe the system model. In Section \ref{ss:single_energy}, we investigate distortion minimization for a battery-run system, and provide properties of the optimal distortion and power allocation. We also propose a $2D$ waterfilling algorithm for $d=1$. We study distortion minimization with energy constraints in Section \ref{ss:energy const}. We investigate the structure of the optimal distortion and power allocation, and provide $2D$ directional waterfilling algorithm for the energy harvesting, processing and sampling cost in Sections \ref{ss:multi_energy}, \ref{process}, \ref{sampling} respectively. In Section \ref{result}, numerical results are presented and in Section \ref{s:conc} we conclude.

\vspace{-0.1in}
\section{System Model}\label{sys model}
We consider a wireless sensor node measuring source samples that are independent and identically distributed (i.i.d.) with a given distribution. Due to the potentially time-varying nature of the underlying physical phenomena, we assume that the statistical properties of the source samples change over time. To model this change, we consider a time slotted system with $N$ time slots, with each time slot containing $n$ source samples. We denote the samples arriving at time slot $i$ as source $i$, and assume that the samples of source $i$ come from a zero-mean Gaussian distribution with variance $\sigma_i^2$. The samples are compressed and stored in a data buffer of size $B_{max}$ bits/source sample. In addition, in order to model delay-limited scenarios, e.g., real-time applications, we impose delay constraints on the samples, such that samples arriving in a time slot need to be delivered within at most $d$ time slots. After $d$ time slots, samples become stale, and we set the corresponding distortion to its maximum value, $\sigma_i^2$.

We consider that the collected samples are delivered over a fading channel having an additive white Gaussian noise (AWGN) with zero mean and unit variance. We assume that the real valued channel power gain remains constant within each time slot, and its value for time slot $i$ is denoted by $h_i$. Assuming that the time slot durations in terms of channel use are large enough to invoke Shannon capacity arguments, the maximum transmission rate in time slot $i$ is given by the Shannon capacity $\frac{1}{2}\log(1+h_ip_i)$, where $p_i$ indicates the average transmission power in time slot $i$. Since the source statistics do not change within a time slot, constant power transmission within each time slot can be shown to be optimal. This follows from the concavity and the monotonically increasing property of the Shannon capacity. We also assume that in each time slot the number of source samples collected is equal to the number of channel uses. However, the results in this paper can be easily extended to bandwidth expansion/compression.

Since the samples are continuous valued, lossy reconstruction at the destination is unavoidable. We consider mean squared error distortion criterion on the samples at the destination. Denoting the average distortion of the source $i$ by $D_i$, the objective is to minimize $D \triangleq \sum_{i=1}^N D_i$. We are interested in \textit{offline optimization}, that is, we assume that the transmitter knows all the sample variances and the channel gains for time slots $i=1,...,N$ in advance.
A \emph{transmission policy} refers to average transmission power $p_i$ and average distortion $D_i$ allocation to channel $i$ and source samples collected in time slot $i$, respectively, for $i=1,...,N$. We study the optimal transmission policy under different energy constraints. First, we consider a battery operated system in which sensor node has $E$ Joules of energy at the beginning of transmission. Then, we investigate more stringent energy constraints including energy harvesting, energy cost of processing and sampling. For the energy harvesting system, we assume that the sensor harvests energy packets of size $E_i$ Joules at the beginning of time slot $i$, $i=1,...,N$. The processing cost is modelled as constant $\epsilon_p$ Joules per transmitted symbol, and it is assumed to be independent of the transmission power. The sampling cost is also assumed to be constant, and considered as $\epsilon_s$ Joules per source sample and independent of the sampling rate \cite{osvaldo}.

\begin{figure}\label{sysmod}
\centering
\includegraphics[scale=0.5,trim= 25 45 0 10]{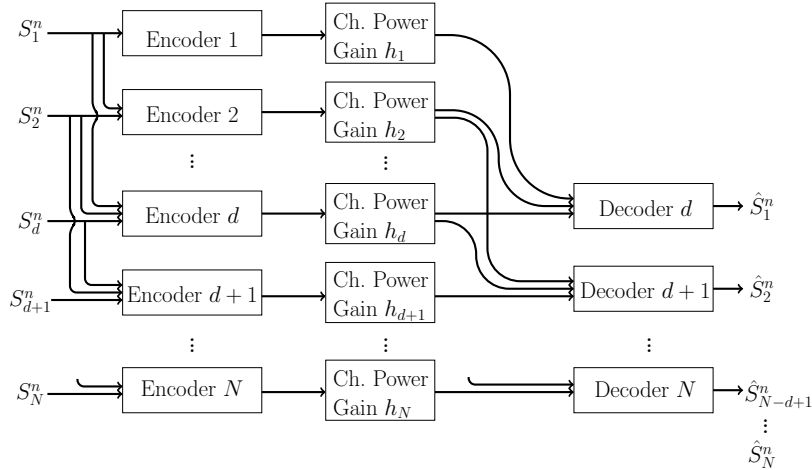}
\caption{Equivalent multiterminal source-channel communication scenario under orthogonal multiple access. $S_i^n$ denotes source samples in time slot $i$, $\hat{S}_i^n$ denotes their reconstruction at the receiver.}
\vspace{-0.4in}
\end{figure}

This formulation considers separate source and channel coding. We can equivalently model this point-to-point communication problem as multiterminal source-channel communication under orthogonal multiple access as shown in Figure 1. In this correspondence, Encoder $i$ corresponds to the encoder at time slot $i$ which observes source samples over the last $d$ time slots, and transmits over the channel within time slot $i$. Similarly, we can consider a separate decoder for each time slot $i$, $i=d,d+1,...,N$, such that Decoder $i$ observes channel outputs $i-d+1,...,d$, and reconstructs the source samples that have been accumulated within time slot $i-(d-1)$. Note that this is equivalent to decoding the source samples just before their deadline expires, since decoding them earlier does not gain anything to the system. Using \cite{Shamai} we can argue the optimality of source-channel separation in this setting; hence the above formulation gives us the optimal total distortion.

In the next section, we study the optimal distortion and power allocation for the battery-run system. Then, in Section \ref{ss:energy const} we investigate additional energy constraints on the system including energy harvesting and energy cost of processing and sampling. We study each energy constraint separately to study its effect on the optimal transmission policy. In Section \ref{ss:multi_energy}, we incorporate energy harvesting capability into the sensor node. Then, in Section \ref{process}, we consider jointly the energy cost of transmission and processing. Finally, we consider both transmission and sampling energy cost in Section \ref{sampling}. Details of the energy models will be presented in the relevant sections.
\vspace{-0.1in}
\section{Distortion Minimization for a Battery-Run System}\label{ss:single_energy}
\vspace{-0.1in}
We assume that the sensor node has $E$ Joules of energy at the beginning of transmission. We focus only on the energy consumption of the power amplifier, and ignore any energy cost due to processing and sampling.  We denote the rate allocated to source $i$ in time slot $j$, $j\leq N$ as $R_{i,j}$. Note that $R_{i,j}=0$ for $i+d<j$ or $j<i$. In a feasible transmission policy, the transmission power in time slot $i$ limits the maximum rate that can be transmitted over that time slot. Therefore, any feasible transmission policy should satisfy the following constraints:
\vspace{-0.1in}
\begin{align}\label{rate 1}
\sum_{i=j-d+1}^{j}{R_{i,j}} &\leq \frac{1}{2}\log\left(1+h_j p_j\right), \quad j=1,...,N,
\vspace{-0.1in}
\end{align}
where $R_{i,j}=0$ for $i<1$. The rate-distortion theorem in \cite{infotheory} states that the average distortion of the samples taken at time slot $i$, $D_i$, should satisfy the following.
\begin{align}\label{rate 2}
\frac{1}{2}\log\left(\frac{\sigma_i^2}{D_i}\right) &\leq \sum_{j=i}^{i+d-1}{R_{i,j}}, \quad i=1,...,N.
\vspace{-0.1in}
\end{align}
In addition, the limited data buffer size imposes the following constraints.
\begin{align}\label{rate 3}
\sum_{j=k}^{k+d-1}{\sum_{i=j-d+1}^{k}R_{i,j}} & \leq B_{max}, \quad k=1,...,N.
\vspace{-0.1in}
\end{align}
\vspace{-0.1in}
\begin{remark}
Note that the buffer size constraint is in terms of the total bits per sample for those sources that have not yet expired. This would mean that the buffer size is infinite since the above assumptions of capacity and rate-distortion achieving codes stipulate $n\rightarrow \infty$.
\end{remark}
The goal is to identify $R_{i,j}$ and $D_i$ values that minimize $D = \sum_{i=1}^N D_i$ under constraints (\ref{rate 1})-(\ref{rate 3}).

It can be shown using Fourier-Motzkin elimination \cite{gamal} that the above inequalities (\ref{rate 1})-(\ref{rate 3}) are equivalent to the following causality, delay and rate constraints, respectively. The proof of Fourier-Motzkin elimination for the case of three time slots with delay constraint $d=2$ is given in Appendix.
\vspace{-0.05in}
\begin{align}
\label{const 3}
\sum_{j=i}^{N}{r_j} &\leq  \sum_{j=i}^{N}{c_j}, \quad i=1,...,N,\\
\label{const 4}
\sum_{j=k}^{i}{r_j} &\leq \sum_{j=k}^{i+d-1}{c_j}, \quad i=k,...,N-d,\quad k=1,...,N-d,\\
\label{const 5}
\sum_{j=k}^{i+1}{r_j} & \leq  \sum_{j=k}^{i}{c_j}+ B_{max}, ~~ i=k,...,N-1, ~~k=1,...,N-1, \\ \label{const 6}
r_j &\leq  B_{max}, \quad i=1,...,N,
\end{align}
where $r_i \triangleq \frac{1}{2}\log\left(\frac{\sigma_i^2}{D_i}\right)$ and $c_i \triangleq \frac{1}{2}\log\left(1+h_i p_i\right)$. Notice that $r_i$ corresponds to the total source rate for the samples collected in time slot $i$, and $c_i$ is the channel capacity for time slot $i$ for power $p_i$ and channel gain $h_i$.
The causality constraints in (\ref{const 3}) suggest that the samples can only be transmitted after they have arrived. The delay constraints in (\ref{const 4}) stipulate that the samples collected in time slot $i$ need to be delivered to the destination within the following $d$ time slots. The data buffer constraints in (\ref{const 5})-(\ref{const 6}) impose restrictions on the amount of bits per sample. The goal of the transmitter is to allocate its transmission power $p_i$ within each time slot and choose distortion level $D_i$ for each source, $i=1,...,N$, such that the causality, delay, and data buffer constraints are satisfied, while the sum distortion $D$ at the destination is minimized.

Then, the optimization problem can be formulated as follows.
\vspace{-0.05in}
\begin{subequations}\label{p:2}
\begin{align}\label{p:2a}
&\underset{r_i,c_i}{\operatorname{min}} && \sum_{i=1}^{N}{\sigma_i^2 2^{-2r_i}} \\\label{p:2b}
&\text{s.t.} && \sum_{i=1}^{N}  \frac{2^{2c_i}-1}{h_i} \leq E,  \\ \label{p:2c}
&&& \sum_{j=i}^{N}{r_j} \leq \sum_{j=i}^{N}{c_j}, \quad i=1,...,N,  \\ \label{p:2d}
&&& \sum_{j=k}^{i}{r_j} \leq \sum_{j=k}^{i+d-1}{c_j},  \quad i=k,...,N-d,  \quad k=1,...,N-d,\\ \label{p:2d2}
&&& \sum_{j=k}^{i+1}{r_j} \leq \sum_{j=k}^{i}{c_j}+  B_{max}, \quad i=k,...,N-1,  \quad k=1,...,N-1, \\ \label{p:2e}
&&& 0 \leq r_i \leq B_{max} \quad \text{and } \quad 0\leq c_i, ~~ i=1,...,N.
\vspace{-0.1in}
\end{align}
\end{subequations}
where the constraint in (\ref{p:2b}) ensures that the total consumed energy is less than the energy available in the battery at $t=0$. The constraints in (\ref{p:2c}), (\ref{p:2d}), and (\ref{p:2d2}) are the causality, delay and data buffer size constraints from (\ref{const 3}), (\ref{const 4}), and (\ref{const 5}), respectively. Since the optimization problem in (\ref{p:2}) is convex, we can compute the optimal solution by efficient numerical methods \cite{Boyd}. In the following, we investigate the properties of the optimal solution using the Karush-Kuhn-Tucker (KKT) optimality conditions. The Lagrangian of (\ref{p:2}) is defined as follows:
\begin{align}\label{l:1}
\mathcal{L} & = \sum_{i=1}^{N}{\sigma_i^2 2^{-2r_i}}+\lambda \left( \sum_{i=1}^{N}  \frac{2^{2c_i}-1}{h_i}- E_1 \right)+\sum_{i=1}^{N}{\gamma_i \left( \sum_{j=i}^{N}{r_j}-\sum_{j=i}^{N}{c_j}\right)} \nonumber\\
&+\sum_{k=1}^{N-d}\sum_{i=k}^{N-d}{\delta_{i,k} \left(\sum_{j=k}^{i}{r_j}- \sum_{j=k}^{i+d-1}{c_j}\right)}+\sum_{k=1}^{N-1}\sum_{i=k}^{N-1}{\zeta_{i,k} \left(\sum_{j=k}^{i+1}{r_j}- \sum_{j=k}^{i}{c_j}-B_{max}\right)}\nonumber\\
& -\sum_{i=1}^N \beta_i r_i+ \sum_{i=1}^N\rho_i (r_i-B_{max})-\sum_{i=1}^{N} \mu_i c_i,
\end{align}
where $\lambda \geq 0$, $\gamma_i\geq 0$, $\delta_{i,k}\geq 0$, $\zeta_{i,k}\geq 0$, $\beta_i \geq 0$, $\rho_i \geq 0$ and $\mu_i \geq 0$ are Lagrange multipliers corresponding to (\ref{p:2b})-(\ref{p:2e}).

Taking the derivative of the Lagrangian with respect to $r_{i}$ and $c_i$, we get
\begin{eqnarray}\label{d:1}
\frac{\partial \mathcal{L}}{\partial r_{i}} = - 2 (\ln2) \sigma_i^2 2^{-2r_i}+ \sum_{j=1}^{i}{\gamma_j}+\sum_{k=1}^{i}\sum_{j=i}^{N-d}{\delta_{j,k}}+\sum_{k=1}^{i}\sum_{j=i-1}^{N-1}{\zeta_{j,k}}-\beta_i+\rho_i=0, \quad \forall i,
\end{eqnarray}
where $\zeta_{i-1,i}=0$ for $\forall i$, and
\begin{eqnarray}\label{d:2}
\frac{\partial \mathcal{L}}{\partial c_{i}} = \lambda \frac{2 (\ln2) 2^{2c_i}}{h_i}  - \sum_{j=1}^{i}{\gamma_j} -\sum_{k=1}^{i}\sum_{j=i-d+1}^{N-d}{\delta_{j,k}}-\sum_{k=1}^{i}\sum_{j=i}^{N-1}{\zeta_{j,k}} -\mu_i= 0,\quad  \forall i,
\end{eqnarray}
where  $\delta_{j,k}=0$ for $j< k$.

\subsection{Optimal Distortion Allocation}
From (\ref{d:1}), replacing $r_i$ with $\frac{1}{2}\log\left(\frac{\sigma_i^2}{D_i^*}\right)$, we obtain
\begin{eqnarray}\label{ds:1}
D_i^*=\frac{1}{2\ln 2}\left(\sum_{j=1}^{i}{\gamma_j}+\sum_{k=1}^{i}\sum_{j=i}^{N-d}{\delta_{j,k}}+\sum_{k=1}^{i}\sum_{j=i-1}^{N-1}{\zeta_{j,k}}-\beta_i+\rho_i\right).
\end{eqnarray}
The complementary slackness conditions require that, whenever $\beta_i >0$, we have $D_i=\sigma_i^2$, and whenever $\rho_i >0$, we have $D_i=\sigma_i^2 2^{-2B_{max}}$. Therefore, the optimal distortion $D_i$ can be further simplified as
\begin{eqnarray}\label{sol 1}
D_i^*=\left\{
\begin{array}{c l}
    \sigma_i^2 2^{-2B_{max}}, &  ~~~\text{if } \xi_i \leq \sigma_i^2 2^{-2B_{max}}, \\
    \xi_i, &  ~~~\text{if }\sigma_i^2 2^{-2B_{max}} < \xi_i< \sigma_i^2, \\
   \sigma_i^2,  &  ~~~\text{if } \xi_i \geq \sigma_i^2,
\end{array}\right.
\end{eqnarray}
where $\xi_i$ is defined as:
\begin{eqnarray}\label{sol 2}
\xi_i \triangleq \frac{1}{2 \ln2}\left(\sum_{j=1}^{i}{\gamma_j}+\sum_{k=1}^{i}\sum_{j=i}^{N-d}{\delta_{j,k}}+\sum_{k=1}^{i}\sum_{j=i-1}^{N-1}{\zeta_{j,k}}\right).
\end{eqnarray}
Note that $\xi_i$ is similar to the {\em reverse water level} in the classical solution of the optimal distortion levels for parallel Gaussian sources \cite{infotheory}. While the classical solution has a fixed reverse water level, i.e., $\xi_i$ is independent of $i$, in our formulation, due to the causality, delay and data buffer size constraints, the reverse water level depends on the source index $i$. Note that the optimal distortion $D_i$ is confined to the interval  $[\sigma_i^2 2^{-2B_{max}}, \sigma_i^2]$ for time slot $i$.

Next, we identify some properties of the optimal distortion allocation.
\begin{lemma}\label{lemma 1}
Whenever the reverse water level $\xi_i$ in (\ref{sol 2}) increases from time slot $i$ to time slot $i+1$, all samples collected until time slot $i$ must be transmitted by the end of time slot $i$, and whenever $\xi_i$ decreases from time slot $i$ to time slot $i+1$, either the data buffer is full at the beginning of time slot $i$ and/or delivery of the samples collected at time slot  $k$, $k \in i+1,...,i+d-2$, is postponed by $i-k+d$ time slots.
\end{lemma}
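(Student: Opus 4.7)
My plan is to read both statements off an explicit expression for $\xi_{i+1}-\xi_i$ in terms of the KKT multipliers and then interpret each tight constraint via complementary slackness. Starting from (\ref{sol 2}) and using the conventions $\delta_{j,k}=0$ for $j<k$ and $\zeta_{i-1,i}=0$ for all $i$, a direct cancellation of the terms common to the two sums gives
\[
2\ln 2\,(\xi_{i+1}-\xi_i)=\gamma_{i+1}+\sum_{j=i+1}^{N-d}\delta_{j,i+1}+\sum_{j=i+1}^{N-1}\zeta_{j,i+1}-\sum_{k=1}^{i}\delta_{i,k}-\sum_{k=1}^{i-1}\zeta_{i-1,k}.
\]
Since every multiplier is non-negative, the sign of $\xi_{i+1}-\xi_i$ is controlled by which group of multipliers is active, and the two bullet points of the lemma correspond exactly to the two groups.

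For the increase case, $\xi_{i+1}>\xi_i$ forces at least one of $\gamma_{i+1}$, $\delta_{j,i+1}$ with $j\geq i+1$, or $\zeta_{j,i+1}$ with $j\geq i+1$ to be strictly positive. In each of the three cases complementary slackness activates the corresponding inequality in (\ref{p:2c})--(\ref{p:2d2}) at the ``lower index'' $k=i+1$. The key step is then a sandwich argument that combines the channel constraint $\sum_m R_{m,j}\leq c_j$ (from (\ref{rate 1})) with the source-rate identity $r_m\leq\sum_j R_{m,j}$ (from (\ref{rate 2})): for the appropriate window $J$ of channel indices and set $M$ of source indices in each case, the chain
\[
\sum_{j\in J}c_j\;\geq\;\sum_{j\in J}\sum_{m} R_{m,j}\;\geq\;\sum_{m\in M}r_m
\]
collapses to an equality, which forces $R_{m,j}=0$ for every $m\leq i$ and every $j\geq i+1$ inside the active window. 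Since each source $m\leq i$ already has deadline $m+d-1\leq i+d-1$ lying inside that window, it can only be served through channels $j\leq i$, i.e., it must be cleared from the buffer by the end of slot $i$, which is the first claim.

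For the decrease case, $\xi_{i+1}<\xi_i$ forces some $\delta_{i,k}$ with $k\leq i$ or some $\zeta_{i-1,k}$ with $k\leq i-1$ to be positive. If $\zeta_{i-1,k}>0$, the active constraint $\sum_{j=k}^{i}r_j=\sum_{j=k}^{i-1}c_j+B_{max}$ is (by the Fourier--Motzkin derivation that identifies (\ref{rate 3}) with (\ref{const 5})) exactly the statement that the data buffer is filled to capacity at the beginning of slot $i$. If $\delta_{i,k'}>0$, the active delay constraint $\sum_{j=k'}^{i}r_j=\sum_{j=k'}^{i+d-1}c_j$ together with the same sandwich argument shows that channels in $[k',i+d-1]$ are entirely consumed by the sources in $[k',i]$; consequently any later source $m\in\{i+1,\dots,i+d-1\}$ is locked out of those channels and its feasible transmission window shrinks from $[m,m+d-1]$ to $[i+d,m+d-1]$, postponing its earliest transmission slot from $m$ to $i+d$, i.e., by $i-m+d$ slots --- which is exactly the claim with the lemma's index $k$ playing the role of $m$.

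The main obstacle I expect is the index bookkeeping: correctly extracting the five-term identity above after the cancellations, respecting the conventions $\delta_{j,k}=0$ for $j<k$ and $\zeta_{i-1,i}=0$, and then, for each tight dual inequality, identifying the right window $J$ and source set $M$ to plug into the sandwich chain. Once the algebraic identity and the three sandwich windows are pinned down, the rest is a routine application of complementary slackness, and the nonnegativity of the multipliers makes the sign analysis of $\xi_{i+1}-\xi_i$ transparent.
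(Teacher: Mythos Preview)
Your proposal is correct and follows essentially the same route as the paper: you derive the identical five-term expression for $\xi_{i+1}-\xi_i$, split into the same positive and negative multiplier groups, and invoke complementary slackness on the corresponding constraints (\ref{p:2c})--(\ref{p:2d2}) to read off the buffer/delay interpretations. Your ``sandwich'' argument with the $R_{m,j}$ variables is a slightly more explicit justification of why tightness of an aggregated constraint at lower index $i+1$ forces $R_{m,j}=0$ for $m\le i$, $j\ge i+1$, whereas the paper argues this point verbally, but the logical skeleton is the same.
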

\begin{proof}
From (\ref{sol 2}), we have
\begin{eqnarray}\label{sol 3}
\xi_{i+1}-\xi_i=\frac{\gamma_{i+1}+\sum_{j=i+1}^{N-d}\delta_{j,i+1}+\sum_{j=i+1}^{N-1}\zeta_{j,i+1}-\sum_{k=1}^{i-1}\zeta_{i-1,k}-\sum_{k=1}^{i}\delta_{i,k}}{2 \ln2}, ~i=1,...,N-1.
\end{eqnarray}
Therefore, when $\xi_{i+1}-\xi_i>0$, either $\gamma_{i+1}$ or, for some $j\geq i$ $\delta_{j,i+1}$ or $\zeta_{j,i+1}$ , must be positive. From the complementary slackness conditions, we know that whenever $\gamma_{i+1}>0$, the constraint in (\ref{p:2c}) is satisfied with equality, i.e., $\sum_{j=i+1}^{N}r_j=\sum_{j=i+1}^{N}c_j$. This means that all samples collected until time slot $i$ must be transmitted by the end of time slot $i$ since the later time slots can only support the source rates $r_j$, $j \geq i+1$. In addition, from the complementary slackness conditions and the constraint in (\ref{p:2d}), we can conclude that when $\delta_{j,i+1}>0$, $\sum_{k=i+1}^{j}r_k=\sum_{k=i+1}^{j+d-1}c_k$ for $j \geq i+1$ must be satisfied. Since only samples collected at time slots $i+1,...,j$ are delivered in time slots $i+1,...,j+d-1$, and each group of source samples has a delay constraint of $d$ time slots, the samples collected until time slot $i$ should be delivered by the end of time slot $i$. Similarly, from the complementary slackness conditions and the constraint in (\ref{p:2d2}), we can argue that if $\zeta_{j,i+1}>0$ then $\sum_{k=i+1}^{j+1}r_k-\sum_{k=i+1}^{j}c_k=B_{max}$ for $ j\geq i+1$ must be satisfied. This means that the data arriving between time slots $i+1$ and $j$ leads to a full data buffer at time slot $j$ for $j \geq i+1$, so all the samples collected until time slot $i$ must be transmitted by the end of time slot $i$. Therefore, whenever $\xi_i$ in (\ref{sol 2}) increases from time slot $i$ to time slot $i+1$, all samples collected by time slot $i$ must be transmitted until the end of time slot $i$. Note that this leads to an empty data buffer at the end of time slot $i$ which follows from the positivity of $\gamma_{i+1}$, $\delta_{j,i+1}$, $\zeta_{j,i+1}$ for some $j\geq i+1$.

On the other hand, from the complementary slackness conditions and the constraint in (\ref{p:2d}), we can conclude that when $\delta_{i,k}>0$, $\sum_{j=k}^{i}r_j=\sum_{j=k}^{i+d-1}c_j$ for $k \leq i$ should be satisfied. Therefore, samples collected at time slot $i+1$ should be delayed $d$ time slots since time slots $i+1,...,i+d-1$ are allocated for the delivery of samples that have arrived at time slots $k\leq i$. Similarly, from the complementary slackness conditions and the constraint in (\ref{p:2d2}), we can argue that if $\zeta_{i-1,k}>0$ then $\sum_{j=k}^{i}r_j-\sum_{j=k}^{i-1}c_j=B_{max}$ for $ k\leq i-1$ must be satisfied. This means that the data buffer must be full at the beginning of time slot $i$. Since whenever $\xi_i$ decreases from time slot $i$ to time slot $i+1$, $\delta_{i,k}>0$ for some $k \leq i$, or $\zeta_{i-1,k}>0$ for some $ k\leq i-1$. We can conclude that whenever $\xi_i$ decreases from time slot $i$ to time slot $i+1$, either the data buffer is full at the beginning of time slot $i$ and/or the delivery of the samples collected at time slot $k$, $k \in i+1,...,i+d-2$, is postponed by $i-k+d$ time slots.
\end{proof}

\subsection{Optimal Power Allocation}
We can identify the optimal power allocation by replacing $c_i$ with $\frac{1}{2}\log\left(1+h_i p_i\right)$ in (\ref{d:2}). The optimal power allocation is given as follows.
\begin{eqnarray}\label{ds:2}
p_i^*=\left[\frac{\sum_{j=1}^{i}{\gamma_j} +\sum_{k=1}^{i}\sum_{j=i-d+1}^{N-d}{\delta_{j,k}}+\sum_{k=1}^{i}\sum_{j=i}^{N-1}{\zeta_{j,k}}}{2 (\ln2) \lambda} - \frac{1}{h_i}\right]^+,
\end{eqnarray}
where $\delta_{j,k}=0$ for $j< k$. We define $\nu_i \triangleq \frac{\sum_{j=1}^{i}{\gamma_j} +\sum_{k=1}^{i}\sum_{j=i-d+1}^{N-d}{\delta_{j,k}}+\sum_{k=1}^{i}\sum_{j=i}^{N-1}{\zeta_{j,k}}}{2 (\ln2)\lambda}$, which can be interpreted similarly to the classical waterfilling solution obtained for power allocation over parallel channels with {\em water level} being equal to $\nu_i$. Similarly to (\ref{sol 1}), $\nu_i$ depends on $i$ due to causality, delay and data buffer size constraints.

Next, we provide some properties of the optimal power allocation.
\begin{lemma}\label{lemma 3}
Whenever the water level $\nu_i$ in (\ref{sol 2}) increases from time slot $i$ to time slot $i+1$, all the samples collected until time slot $i$ must be transmitted by the end of time slot $i$, and whenever $\nu_i$ decreases from time slot $i$ to time slot $i+1$, either the data buffer is full at the beginning of time slot $i+1$ and/or the delivery of the samples collected at time slot $k$, $k \in i-d+2,...,i$, is postponed by at least $i-k+1$ time slots.
\end{lemma}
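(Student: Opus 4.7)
The plan is to mimic the argument of Lemma \ref{lemma 1}, but applied to the water-level expression $\nu_i$ from (\ref{ds:2}) rather than the reverse water level $\xi_i$, and to keep careful track of the index shifts. First, I would form the difference
\begin{align*}
\nu_{i+1}-\nu_i &= \frac{1}{2(\ln 2)\lambda}\Bigl(\gamma_{i+1} + \sum_{j=i+1}^{N-d}\delta_{j,i+1}+\sum_{j=i+1}^{N-1}\zeta_{j,i+1}\\
&\qquad\qquad\qquad\;\; -\sum_{k=1}^{i-d+1}\delta_{i-d+1,k}-\sum_{k=1}^{i}\zeta_{i,k}\Bigr),
\end{align*}
using $\delta_{j,k}=0$ for $j<k$ to trim the $\delta$ sums. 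This identifies exactly which multipliers must be active in each sign case.

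For the increasing case $\nu_{i+1}>\nu_i$, at least one of $\gamma_{i+1}$, $\delta_{j,i+1}$ for some $j\geq i+1$, or $\zeta_{j,i+1}$ for some $j\geq i+1$ is strictly positive. In each subcase, complementary slackness on (\ref{p:2c}), (\ref{p:2d}), (\ref{p:2d2}) respectively forces $\sum_{j=i+1}^{N}r_j=\sum_{j=i+1}^{N}c_j$, $\sum_{k=i+1}^{j}r_k=\sum_{k=i+1}^{j+d-1}c_k$, or $\sum_{k=i+1}^{j+1}r_k-\sum_{k=i+1}^{j}c_k=B_{max}$. Each of these equalities only uses channel capacities with index $\geq i+1$ to carry source rates with index $\geq i+1$, so by the causality constraint every sample collected up through time slot $i$ must already have been delivered by the end of time slot $i$. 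This is exactly the same reasoning used in Lemma \ref{lemma 1}, so I would simply invoke it rather than reproducing it.

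For the decreasing case $\nu_{i+1}<\nu_i$, positivity must come from $\delta_{i-d+1,k}>0$ for some $k\leq i-d+1$, or from $\zeta_{i,k}>0$ for some $k\leq i$. If $\zeta_{i,k}>0$ then the buffer constraint (\ref{p:2d2}) is tight with $\sum_{j=k}^{i+1}r_j-\sum_{j=k}^{i}c_j=B_{max}$, which says the buffer is full at the beginning of time slot $i+1$. If $\delta_{i-d+1,k}>0$ then (\ref{p:2d}) is tight: $\sum_{j=k}^{i-d+1}r_j=\sum_{j=k}^{i}c_j$, so channel slots $k,\dots,i$ are entirely consumed by samples from slots $k,\dots,i-d+1$, leaving no room for any sample that arrived in slots $i-d+2,\dots,i$ to be served before slot $i+1$. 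A sample arriving in slot $k'\in\{i-d+2,\dots,i\}$ has its earliest transmission at slot $\geq i+1$, i.e., delayed by at least $i-k'+1$ slots beyond its arrival, which is the claimed postponement.

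The only real subtlety, and the place I would spend the most care, is the bookkeeping in computing $\nu_{i+1}-\nu_i$: the index ranges $\sum_{j=i-d+1}^{N-d}\delta_{j,k}$ and $\sum_{j=i}^{N-1}\zeta_{j,k}$ inside $\nu_i$ are shifted by one relative to the ranges appearing in $\xi_i$, which is why the postponement window here is $\{i-d+2,\dots,i\}$ with delay $i-k+1$ rather than the $\{i+1,\dots,i+d-2\}$ with delay $i-k+d$ seen in Lemma \ref{lemma 1}. Once that index calculation is verified, the complementary slackness arguments are identical in spirit to the previous lemma, so the proof reduces to stating the difference formula, enumerating the sign cases, and citing the corresponding constraints.
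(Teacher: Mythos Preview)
Your proposal is correct and follows exactly the paper's approach: the paper's proof simply states the same difference formula $\nu_{i+1}-\nu_i = \frac{1}{2(\ln 2)\lambda}\bigl(\gamma_{i+1}+\sum_{j=i+1}^{N-d}\delta_{j,i+1}+\sum_{j=i+1}^{N-1}\zeta_{j,i+1}-\sum_{k=1}^{i-d+1}\delta_{i-d+1,k}-\sum_{k=1}^{i}\zeta_{i,k}\bigr)$ and then refers back to the arguments of Lemma~\ref{lemma 1}. Your write-up is in fact more explicit than the paper's, which leaves the complementary-slackness case analysis implicit.
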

\begin{proof}
We can show that $\nu_{i+1}-\nu_i=  \frac{\gamma_{i+1}+\sum_{j=i+1}^{N-d}\delta_{j,i+1}+\sum_{j=i+1}^{N-1}\zeta_{j,i+1}-\sum_{k=1}^{i-d+1}\delta_{i-d+1,k}-\sum_{k=1}^{i}\zeta_{i,k}}{2 (\ln2) \lambda}$. Using arguments similar to the proof of Lemma \ref{lemma 1}, the proof can be completed.
\end{proof}

\begin{rem}\label{remark1}
When there is no delay constraint, i.e., $d=N$, the constraint in (\ref{p:2d}) is no longer necessary and $\delta_{i,k}=0$, $\forall i,k$. Therefore, from Lemma \ref{lemma 1} (Lemma \ref{lemma 3}), we can argue that full data buffer at the beginning of time slot $i$ ($i+1$) is the only reason of a decrease in the reverse water level $\xi_i$ (the water level $\nu_i$) from time slot $i$ to time slot $i+1$.
\end{rem}

\begin{rem}\label{remark2}
When the data buffer size is infinite, i.e., $B_{max}=\infty$, we have $\zeta_{i,k}=0$, $\forall i,k$. Following the arguments in Lemma \ref{lemma 1} (Lemma \ref{lemma 3}), we can conclude that whenever the reverse water level $\xi_i$ (the water level $\nu_i$) decreases from time slot $i$ to time slot $i+1$, delivery of the samples collected at time slot $k$, $k \in i+1,...,i+d-2$ ($k \in i-d+2,...,i$) must be postponed by $i-k+d$ time slots.
\end{rem}

\subsection{Strict delay constraint $(d=1)$}\label{energyd1}
In this section, we investigate the case in which the samples need to be transmitted within the following time slot, i.e., $d=1$. Note that this is equivalent to the problem investigated in \cite{xi} when sensing energy cost is zero. Here we provide a 2D waterfilling interpretation for the solution. The optimization problem in (\ref{p:2}) can be formulated as follows for $d=1$:
\begin{subequations}\label{ppp:4}
\begin{eqnarray}\label{ppp:4a}
\underset{c_i}{\operatorname{min}} && \sum_{i=1}^{N}{ \sigma_i^2 2^{-2c_i}} \\\label{ppp:4b}
\text{s.t.}~ && \sum_{i=1}^{N}\frac{2^{2c_i}-1}{h_i}+  \leq E,  \\ \label{ppp:4c}
&& 0 \leq c_i \leq B_{max}, ~~ i=1,...,N,
\end{eqnarray}
\end{subequations}
where $c_i=\frac{1}{2}\log\left(1+h_ip_i\right)=\frac{1}{2}\log\left(\frac{\sigma_i^2}{D_i}\right)$.

Solving the above optimization problem we find
\begin{align}\label{s:3}
p_i^*= \frac{\sigma_i}{\sqrt{h_i}} \left[\min\left\{ {\frac{2^{2B_{max}}}{\sigma_i \sqrt{h_i}}},\frac{1}{ \lambda}\right\} - \frac{1}{\sigma_i \sqrt{h_i}}\right]^+.
\end{align}
Defining $M_i \triangleq \frac{\sigma_i}{\sqrt{h_i}}$ and $K_i \triangleq \frac{1}{\sigma_i \sqrt{h_i}}$, the optimal power in (\ref{s:3}) can be written as
\begin{align}\label{s:3a}
p_i^* = M_i \left[ \min\left\{K_i 2^{2B_{max}}, \frac{1}{\lambda} \right\} - K_i\right]^+.
\end{align}
Since $\frac{1}{2}\log\left(\frac{\sigma_i^2}{D_i}\right) \leq \frac{1}{2}\log\left(1+h_i p_i\right)$ is satisfied with equality for $d=1$, from (\ref{s:3a}) the optimal distortion
$D_i^*$  is given by \vspace{-0.05in}
\begin{eqnarray}\label{s:5a}
D_i^*=\left\{
\begin{array}{c l}
    \sigma_i^2 2^{-2B_{max}}, &  ~~~\text{if } M_i\lambda \leq \sigma_i^2 2^{-2B_{max}}, \\
   M_i\lambda, &  ~~~\text{if }  \sigma_i^2 2^{-2B_{max}} < M_i\lambda < \sigma_i^2, \\
   \sigma_i^2,  &  ~~~\text{if }  M_i\lambda \geq \sigma_i^2.
\end{array}\right.
\end{eqnarray}

\begin{figure}
\centering
\subfigure[]{
\includegraphics[scale=0.54,trim= 18 0 0 0]{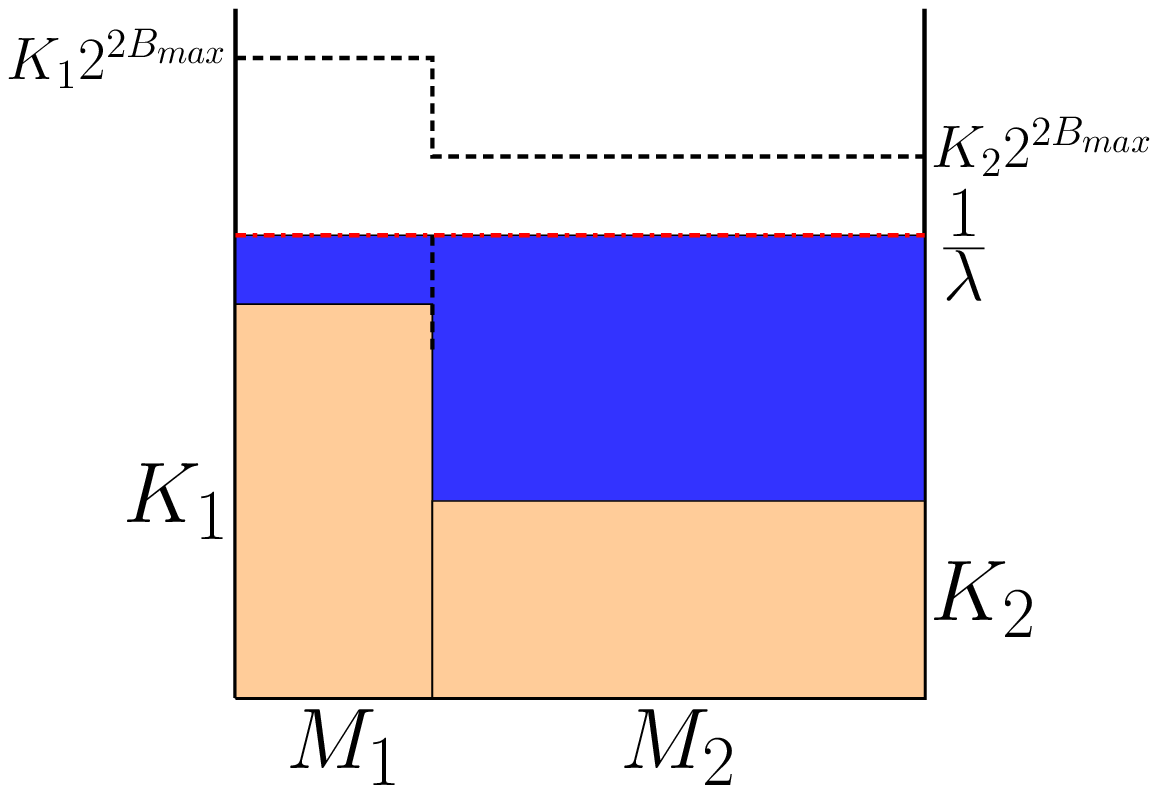}
\label{fig 1a}%
}
\subfigure[]{
\includegraphics[scale=0.54,trim= 48 0 35 0]{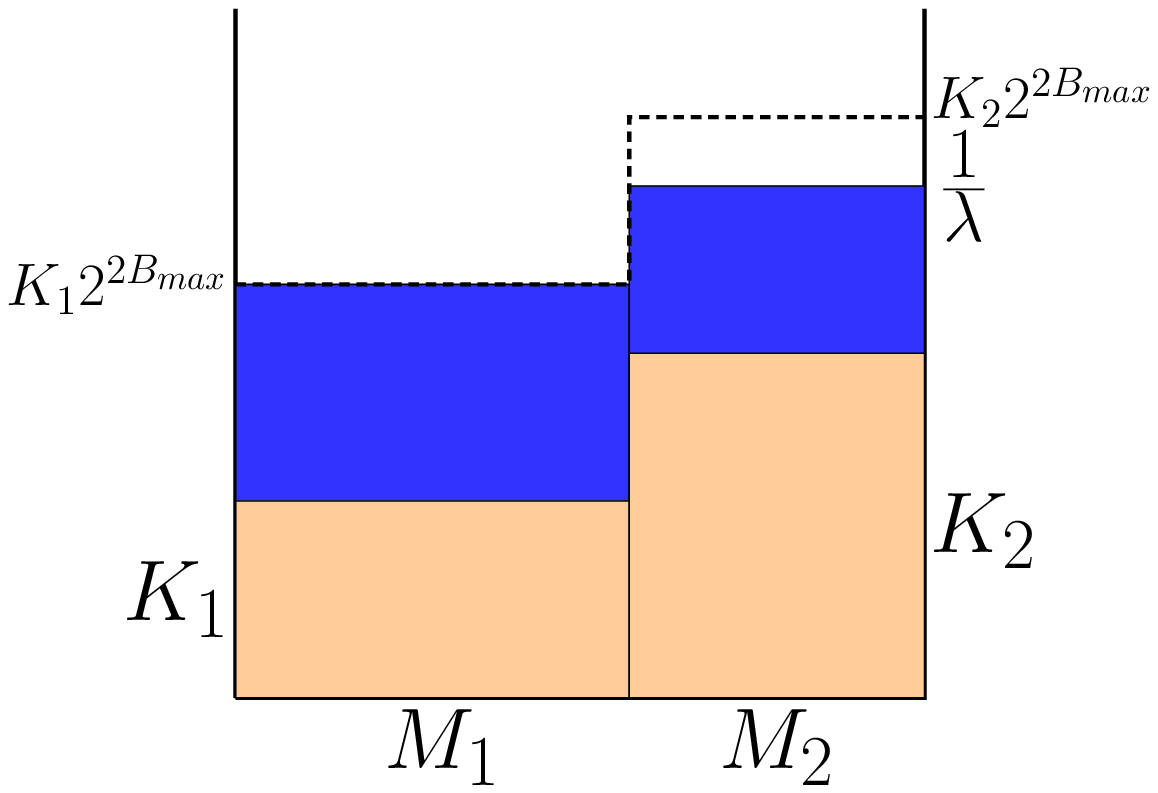}
\label{fig 1b}
\vspace{-0.1in}
}
\caption{2D water-filling algorithm, (a) data buffer constraint is not active (b) data buffer constraint is active.} \label{fig 1}
\vspace{-0.15in}
\end{figure}

The above solution is illustrated in Fig. \ref{fig 1} for $N=2$. For each time slot, we have rectangles of width $M_i$ and height $K_i$. The total energy is poured above the level $K_i$ for each time slot up to the water level  $\frac{1}{\lambda}$. The  power allocated to time slot $i$ is given by the shaded area below the water level and above $K_i$. Note that the water level is bounded by the data buffer size, i.e., $K_i 2^{2B_{max}}$, as argued in (\ref{s:3a}). If $p_i^*>0$, the distortion for source $i$ is given by the width $M_i$ times the reciprocal of the water level, and if $p_i^*=0$, the distortion for source $i$ is $\sigma_i^2=\frac{M_i}{K_i}$. As seen in Fig. \ref{fig 1a} the water level is constant over the two time slots, therefore, the optimal allocated power in time slot $i$ is given by $M_i\left(\frac{1}{\lambda}-K_i \right)$ for $i=1,2$, and the optimal distortion is given by $M_i{\lambda}$. However, in Fig \ref{fig 1b} the water level in the first time slot is limited by $K_1 2^{2B_{max}}$ due to the data buffer constraint. Therefore, as argued in Lemma \ref{lemma 3}, the increase in the water level from the first time slot to the second is due to full data buffer at the first time slot. The optimal power levels for the first and second time slots are given by $M_i K_i (2^{2B_{max}}-1)$ and $M_i\left(\frac{1}{\lambda}-K_i\right)$, respectively. The optimal average distortion values are $\frac{M_1}{K_1 2^{2B_{max}}}$ and $M_2 \lambda$ for source one and two, respectively.

\section{Distortion Minimization Under Various Energy Constraints}\label{ss:energy const}
In this section, we consider additional energy constraints on the system including energy harvesting, processing and sensing energy costs. We study the constraints separately to clearly illustrate their impact on the performance. In Section \ref{ss:multi_energy} we identify the effect of energy harvesting on the optimal power and distortion allocation. Then, in Section \ref{process} we consider the energy cost of processing circuitry together with the transmission energy, and show that the optimal power allocation is bursty in this case. Finally, in Section \ref{sampling} we investigate the effect of sampling cost on the optimal power and distortion allocation.

\subsection{Distortion Minimization with Energy Harvesting}\label{ss:multi_energy}
In this section, we consider energy harvesting at the sensor node. We consider that the sensor node harvests energy packet of size $E_i$ at the beginning of time slot $i$, $i=1,...,N$. We consider only the transmission cost and ignore the energy cost of processing and sampling. Due to energy arrivals over time, a feasible transmission policy must satisfy the following energy casuality constraint.
\begin{align}\label{p:7}
\sum_{j=1}^{i} \frac{2^{2c_j}-1}{h_j}\leq \sum_{j=1}^{i} E_j, \quad i=1,...,N.
\end{align}
Consequently, the optimization problem in (\ref{p:2}) remains the same except that the constraint (\ref{p:2b}) is replaced by the energy casuality constraints in (\ref{p:7}). Then the Lagrangian of (\ref{p:2}) with energy harvesting becomes:
\begin{align}\label{l:4}
 \mathcal{L} & = \sum_{i=1}^{N}{\sigma_i^2 2^{-2r_i}}+\sum_{i=1}^{N}{\lambda_i \left( \sum_{j=1}^{i}   \frac{2^{2c_i}-1}{h_i}-\sum_{j=1}^{i} E_j \right)}+\sum_{i=1}^{N}{\gamma_i \left( \sum_{j=i}^{N}{r_j}-\sum_{j=i}^{N}{c_j}\right)} \nonumber\\
&+\sum_{k=1}^{N-d}\sum_{i=k}^{N-d}{\delta_{i,k} \left(\sum_{j=k}^{i}{r_j}- \sum_{j=k}^{i+d-1}{c_j}\right)}+\sum_{k=1}^{N-1}\sum_{i=k}^{N-1}{\zeta_{i,k} \left(\sum_{j=k}^{i+1}{r_j}- \sum_{j=k}^{i}{c_j}-B_{max}\right)}\nonumber\\
& -\sum_{i=1}^N \beta_i r_i+ \sum_{i=1}^N\rho_i (r_i-B_{max})-\sum_{i=1}^{N} \mu_i c_i,
\end{align}
with $\lambda_i\geq 0$, $\gamma_i\geq 0$, $\delta_{i,k}\geq 0$, $\zeta_{i,k}\geq 0$, $\beta_i\geq 0$, $\rho_i\geq 0$  and $\mu_i \geq 0$ as the Lagrange multipliers.

The derivative of the Lagrangian with respect to $r_i$ is the same as in (\ref{d:1}); hence, the structure of the optimal distortion is the same as in Section \ref{ss:single_energy}. Therefore, the properties of the optimal distortion given in Lemma \ref{lemma 1} still hold.

Differentiating the Lagrangian with respect to $c_i$, we can argue that the optimal channel rate $c_i$ of time slot $i$ must satisfy
\begin{eqnarray}\label{d:33}
\frac{\partial \mathcal{L}}{\partial c_{i}} = \frac{2 (\ln2) 2^{2c_i}}{h_i}\sum_{j=i}^{N}{\lambda_j} - \sum_{j=1}^{i}{\gamma_j} -\sum_{k=1}^{i}\sum_{j=i-d+1}^{N-d}{\delta_{j,k}}-\sum_{k=1}^{i}\sum_{j=i}^{N-1}{\zeta_{j,k}} -\mu_i= 0,
\end{eqnarray}
for $i=1,...,N$ where  $\delta_{j,k}=0$ for $j< k$.

This leads to the optimal power level $p_i^*$ as follows.
\begin{eqnarray}\label{ds:3}
p_i^*=\left[\frac{\sum_{j=1}^{i}{\gamma_j}+\sum_{k=1}^{i}\sum_{j=i-d+1}^{N-d}{\delta_{j,k}}+\sum_{k=1}^{i}\sum_{j=i}^{N-1}{\zeta_{j,k}}}{2 \ln2 \sum_{j=i}^{N}{\lambda_j} } - \frac{1}{h_i}\right]^+, \quad \forall i.
\end{eqnarray}

Defining $\pi_i \triangleq \frac{\sum_{j=1}^{i}{\gamma_j}+\sum_{k=1}^{i}\sum_{j=i-d+1}^{N-d}{\delta_{j,k}}+\sum_{k=1}^{i}\sum_{j=i}^{N-1}{\zeta_{j,k}}}{2 \ln2 \sum_{j=i}^{N}{\lambda_j}}$, we can interpret (\ref{ds:3}) similarly to the directional waterfilling solution of \cite{fade} with water level equal to $\pi_i$. Accordingly, Lemma \ref{lemma 3} is updated as follows for an energy harvesting sensor node.
\begin{lemma}\label{lemma 5}
Whenever the water level $\pi_i$ in (\ref{sol 2}) increases from time slot $i$ to time slot $i+1$, either all the samples collected until time slot $i$ are transmitted by the end of time slot $i$ and/or the battery is empty at the end of time slot $i$. Similarly if $\pi_i$ decreases from time slot $i$ to time slot $i+1$, either the data buffer is full at beginning of time slot $i+1$ and/or delivery of the samples collected within time slot $k$, $k \in i-d+2,...,i$, is postponed by at least $i-k+1$ time slots.
\end{lemma}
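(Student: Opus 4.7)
The plan is to follow the derivation underlying Lemma \ref{lemma 3}, adapted to the fact that in the energy-harvesting case the denominator of $\pi_i$ now depends on $i$. Let $N_i$ denote the numerator of $\pi_i$ in (\ref{ds:3}) and $S_i \triangleq \sum_{j=i}^{N}\lambda_j$, so that $\pi_i = N_i/(2\ln 2 \, S_i)$. Using $S_{i+1}=S_i-\lambda_i$, I would first write
\begin{equation*}
\pi_{i+1}-\pi_i \;=\; \frac{(N_{i+1}-N_i)\,S_i \;+\; \lambda_i\, N_i}{2\ln 2 \, S_i\, S_{i+1}},
\end{equation*}
so the sign of $\pi_{i+1}-\pi_i$ agrees with the sign of the numerator. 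Expanding $N_{i+1}-N_i$ by bookkeeping identical to the $\nu_{i+1}-\nu_i$ calculation in Lemma \ref{lemma 3} gives
\begin{equation*}
N_{i+1}-N_i \;=\; \gamma_{i+1} + \sum_{j=i+1}^{N-d}\delta_{j,i+1} + \sum_{j=i+1}^{N-1}\zeta_{j,i+1} - \sum_{k=1}^{i-d+1}\delta_{i-d+1,k} - \sum_{k=1}^{i}\zeta_{i,k}.
\end{equation*}

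For the \emph{increase} direction, since $\lambda_i\ge 0$ and $N_i\ge 0$, $\pi_{i+1}>\pi_i$ forces either (i) at least one of $\gamma_{i+1}$, $\delta_{j,i+1}$ with $j\ge i+1$, or $\zeta_{j,i+1}$ with $j\ge i+1$ is strictly positive, or (ii) $\lambda_i>0$. Case (i) is handled by the complementary-slackness arguments from Lemma \ref{lemma 1} applied to (\ref{p:2c})--(\ref{p:2d2}): every sample collected by the end of time slot $i$ must be delivered by then. Case (ii) is the genuinely new ingredient; complementary slackness on the energy-causality constraint (\ref{p:7}) yields $\sum_{j=1}^{i}(2^{2c_j}-1)/h_j = \sum_{j=1}^{i} E_j$, i.e.\ the battery is empty at the end of time slot $i$. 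Either condition (or both) can trigger an increase in $\pi_i$, which matches the ``and/or'' in the lemma.

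For the \emph{decrease} direction, observe that $\lambda_i N_i\ge 0$ unconditionally, so $\pi_{i+1}<\pi_i$ requires $N_{i+1}-N_i<0$, which in turn forces $\delta_{i-d+1,k}>0$ for some $k\le i-d+1$ and/or $\zeta_{i,k}>0$ for some $k\le i$. Reusing the Lemma \ref{lemma 1} slackness analysis on (\ref{p:2d}) and (\ref{p:2d2}) then gives, respectively, that delivery of the samples collected in time slot $k\in\{i-d+2,\ldots,i\}$ is postponed by at least $i-k+1$ slots, and that the data buffer is full at the start of time slot $i+1$. The only real subtlety is recognizing that the new term $\lambda_i N_i$, introduced by the $i$-dependent denominator, is nonnegative and therefore can \emph{only} aid an increase and never drive a decrease; once this is noted, the rest of the proof is a direct transcription of the complementary-slackness case analysis already used in Lemmas \ref{lemma 1} and \ref{lemma 3}.
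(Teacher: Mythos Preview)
Your proposal is correct and follows essentially the same route as the paper: identify the extra $\lambda_i$ contribution coming from the energy-causality constraint (\ref{p:7}), note via complementary slackness that $\lambda_i>0$ means the battery is depleted at the end of slot $i$, and otherwise defer to the case analysis of Lemmas \ref{lemma 1} and \ref{lemma 3}. In fact your write-up is more explicit than the paper's, which simply states that the proof ``can be completed'' by following Lemmas \ref{lemma 1} and \ref{lemma 3}; your decomposition $\pi_{i+1}-\pi_i = \big[(N_{i+1}-N_i)S_i + \lambda_i N_i\big]\big/\big(2\ln 2\, S_i S_{i+1}\big)$ and the observation that the nonnegative term $\lambda_i N_i$ can only push toward an increase (never a decrease) make precise exactly what the paper leaves to the reader.
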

\begin{proof}
From complementary slackness conditions, we know that when $\lambda_i>0$, the constraint in (\ref{p:7}) is satisfied with equality, hence, the battery must be empty at the end of time slot $i$. Therefore, following the arguments in the proofs of Lemma \ref{lemma 1} and \ref{lemma 3}, the proof can be completed.
\end{proof}

For the case of strict delay constraint, $d=1$, we can reformulate the optimization problem in (\ref{ppp:4}) by replacing the constraint (\ref{ppp:4b}) by (\ref{p:7}). Solving the optimization problem, we obtain the optimal transmission power and distortion in terms of $M_i$ and $K_i$ as follows.
\begin{eqnarray}
\label{s:4a}
p_i^* = M_i \left[ \min\left\{K_i 2^{2B_{max}}, \frac{1}{\sqrt{ \sum_{i=i}^{N}{\lambda_i}}} \right\} - K_i\right]^+.
\end{eqnarray}

Similarly, the optimal distortion $D_i^*$ is given by
\begin{eqnarray}\label{s:6a}
D_i^*=\left\{
\begin{array}{c l}
    \sigma_i^2 2^{-2B_{max}}, , &  ~~~\text{if } M_i\sqrt{ \sum_{i=i}^{N}{\lambda_i}} < \sigma_i^2 2^{-2B_{max}}, \\
   M_i\sqrt{ \sum_{i=i}^{N}{\lambda_i}}, &  ~~~\text{if }  \sigma_i^2 2^{-2B_{max}} < M_i\sqrt{ \sum_{i=i}^{N}{\lambda_i}} < \sigma_i^2, \\
   \sigma_i^2,  &  ~~~\text{if }  M_i\sqrt{ \sum_{i=i}^{N}{\lambda_i}} \geq \sigma_i^2.
\end{array}\right.
\end{eqnarray}

Extending Section \ref{energyd1}, we can interpret the energy harvesting solution for $d=1$ as  \textit{directional 2D water-filling} such that the harvested energy $E_i$ can only be allocated to time slots $j>i$. Accordingly, we allocate energy to the following time slots starting from the last arriving energy and continuing backwards to the first such that the energy causality constraint is satisfied. In addition, allocated power to time slot $i$ is limited by the data buffer size and channel gain, i.e., $p_i^* \leq M_i K_i \left(2^{2B_{max}}-1\right)=\frac{1}{h_i}\left(2^{2B_{max}}-1\right)$.

\begin{figure}
\centering
\subfigure[]{
\includegraphics[scale=0.55,trim= -20 0 -20 0]{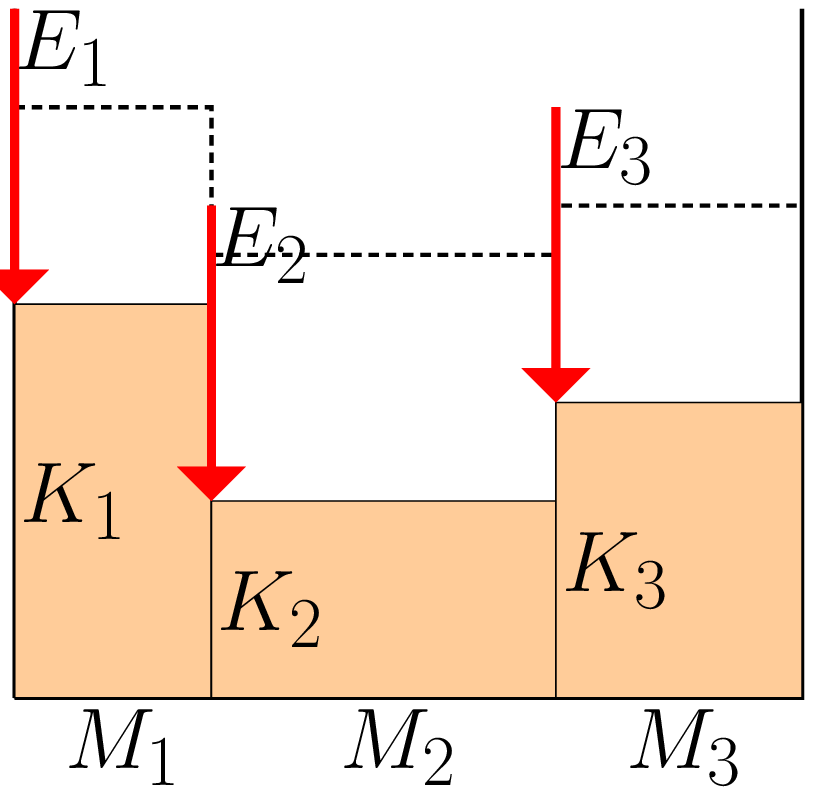}
\label{fig 2a}%
}
\subfigure[]{
\includegraphics[scale=0.55,trim= -20 0 -20 0]{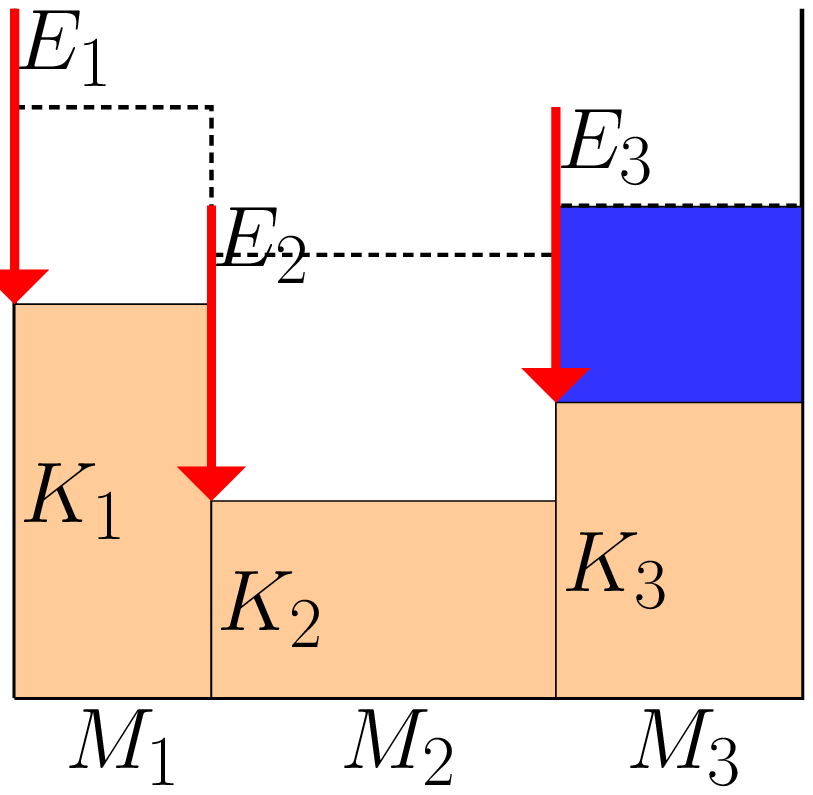}
\label{fig 2b}%
}
\subfigure[]{
\includegraphics[scale=0.55,trim= -20 0 -20 0]{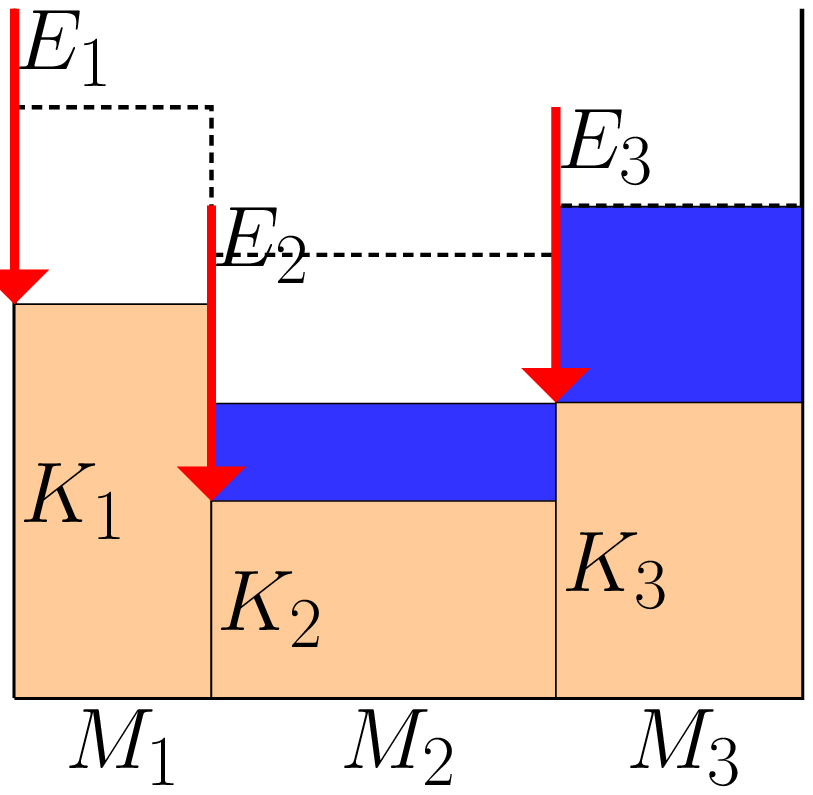}
\label{fig 2c}%
}
\subfigure[]{
\includegraphics[scale=0.55,trim= -20 0 -20 0]{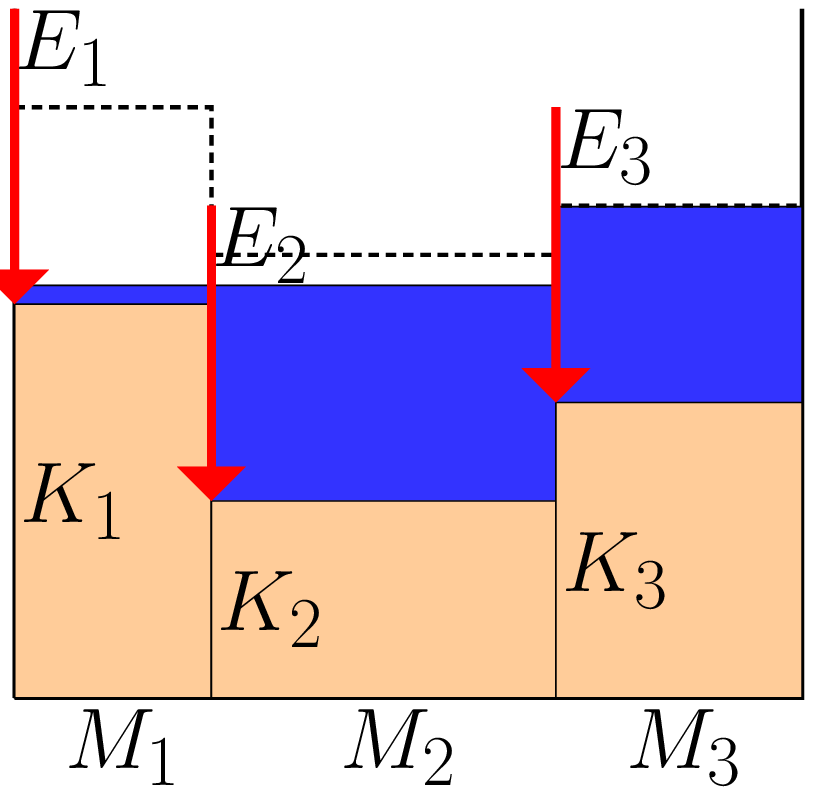}
\label{fig 2d}%
}
\caption{2D directional water-filling algorithm. Dashed line represents the buffer constraints (a) three time slots with energy arrivals $E_i$, $i=1,2,3$, (b) $E_3$ allocated to the third time slot, (c) $E_2$ allocated to the second time slot, (d) $E_1$ allocated to time slots 1 and 2.} \label{fig 2}
\vspace{-0.15in}
\end{figure}

Consider the illustration given in Fig. \ref{fig 2} with three time slots. Similarly to Fig. \ref{fig 1}, we have rectangles of width $M_i$ and height $K_i$. The horizontal dashed lines above the rectangles correspond to $K_i 2^{2B_{max}}$. The arrival times of the energy packets are represented by downward arrows. As argued above, we first allocate the last energy packet $E_3$ to the third time slot as shown in Fig. \ref{fig 2a}. Note that due to the data buffer constraint, the compression rate and the optimal power in the third time slot are limited by $B_{max}$ and $\frac{1}{h_i}\left(2^{2B_{max}}-1\right)$, respectively. This leads to an excessive energy in the battery if $E_3>\frac{1}{h_3}\left(2^{2B_{max}}-1\right)$. Then, as shown in Fig. \ref{fig 2c} the second energy packet $E_2$ is considered for time slots two and three. Since the water level of the second time slot is lower than the third time slot, $E_2$ is allocated only to the second time slot. Finally, we consider the first energy packet $E_1$ and allocate it to the first and second time slots as shown in Fig. \ref{fig 2d}. As argued before, we can obtain the optimal distortion for source $i$ by multiplying $M_i$ with the reciprocal of the water level above rectangle $i$ in Fig. \ref{fig 2d}.

\subsection{Distortion Minimization with Processing Cost}\label{process}
In this section, we investigate the properties of the optimal distortion and power allocation when, in addition to transmission energy, processing energy cost is also taken into account. For ease of exposure, we consider a battery operated system as in Section \ref{ss:single_energy} and ignore sampling cost. We assume that the sensor node consumes energy for processing only when transmitting \cite{elza}. We consider that the processing energy cost is $\epsilon_p$ Joules per transmitted symbol, and it is independent of the transmission power. As it is shown in \cite{glue}, when processing cost is taken into account, the optimal transmission policy becomes bursty. Therefore, the optimal policy may utilize only a fraction of each time slot. We denote the transmission duration within time slot $i$ by $\theta_{i}$, $0 \leq \theta_{i} \leq 1$. We redefine the auxiliary variable $c_i$, the total delivered data in time slot $i$, as $c_i \triangleq \frac{\theta_{i}}{2}\log\left(1+h_i p_i\right)$. Accordingly, the optimization problem in (\ref{p:2}) remains the same except that there is an additional constraint $0 \leq  \theta_{i} \leq 1$, and the constraint (\ref{p:2b}) is replaced by the following energy constraint.
\begin{align}\label{p:8}
\sum_{i=1}^{N} \theta_{i}\left(\frac{2^{\frac{2c_i}{\theta_{i}}}-1}{h_i}+\epsilon_p \right) \leq E.
\end{align}

Then, the Lagrangian of (\ref{p:2}) with processing energy cost is given by the following.
\begin{align}\label{lag:3}
 \mathcal{L} & = \sum_{i=1}^{N}{\sigma_i^2 2^{-2r_i}}+\lambda \left( \sum_{i=1}^{N}  \theta_{i}\left(\frac{2^{\frac{2c_i}{\theta_{i}}}-1}{h_i}+\epsilon_p \right) - E \right)+\sum_{i=1}^{N}{\gamma_i \left( \sum_{j=i}^{N}{r_j}-\sum_{j=i}^{N}{c_j}\right)} \nonumber\\
&+\sum_{k=1}^{N-d}\sum_{i=k}^{N-d}{\delta_{i,k} \left(\sum_{j=k}^{i}{r_j}- \sum_{j=k}^{i+d-1}{c_j}\right)}+\sum_{k=1}^{N-1}\sum_{i=k}^{N-1}{\zeta_{i,k} \left(\sum_{j=k}^{i+1}{r_j}- \sum_{j=k}^{i}{c_j}-B_{max}\right)}\nonumber\\
& -\sum_{i=1}^N \beta_i r_i+ \sum_{i=1}^N\rho_i (r_i-B_{max})-\sum_{i=1}^{N} \mu_i c_i -\sum_{i=1}^{N} \nu_i \theta_{i}+ \sum_{i=1}^N \phi_i (\theta_{i}-1),
\end{align}
where $\lambda \geq 0$, $\gamma_i\geq 0$, $\delta_{i,k}\geq 0$, $\zeta_{i,k}\geq 0$, $\beta_i \geq 0$, $\rho_i \geq 0$, $\mu_i \geq 0$, $\nu_i \geq 0$, and $\phi_i \geq 0$ are Lagrange multipliers.

When we take the derivative of the Lagrangian with respect to $r_i$, and replace $r_i$ with $\frac{1}{2}\log\left(\frac{\sigma_i^2}{D_i}\right)$, we obtain (\ref{ds:1}). Therefore the optimal distortion allocation satisfies (\ref{sol 1}), and the properties given in Lemma \ref{lemma 1} are also valid in this case.

Differentiating the Lagrangian with respect to $c_i$, we obtain
\begin{eqnarray}\label{dp:1}
\frac{\partial \mathcal{L}}{\partial c_{i}} = \lambda \frac{2 (\ln2) 2^{\frac{2c_i}{\theta_{i}}}}{h_i}  - \sum_{j=1}^{i}{\gamma_j} -\sum_{k=1}^{i}\sum_{j=i-d+1}^{N-d}{\delta_{j,k}}-\sum_{k=1}^{i}\sum_{j=i}^{N-1}{\zeta_{j,k}} -\mu_i= 0,~ \forall i,
\end{eqnarray}
where  $\delta_{j,k}=0$ for $j < k$. When we replace $c_i$ in the above equation with $\frac{\theta_{i}}{2}\log\left(1+h_i p_i\right)$, the optimal power allocation is given as in (\ref{ds:2}). However, unlike the optimal transmission policy in Section \ref{ss:single_energy}, due to the processing cost the optimal transmission power $p_i$ needs to be allocated $\theta_{i}$ fraction of time slot $i$. Taking derivative of the Lagrangian with respect to $\theta_{i}$, we get
\begin{eqnarray}\label{dp:2}
\frac{\partial \mathcal{L}}{\partial \theta_{i}}= \lambda \left(\frac{2^{\frac{2c_i}{\theta_{i}}}-1}{h_i}+\epsilon_p -\frac{2 (\ln2) c_i 2^{\frac{2c_i}{\theta_{i}}}}{h_i \theta_{i}}\right) - \nu_i + \psi_i = 0, \quad \forall i.
\end{eqnarray}

Using complementary slackness conditions together with (\ref{dp:2}), we can argue that
\begin{itemize}
\item If $\theta_{i}^* = 0$, then $c_i=0$ and $p_i=0$.
\item If $0< \theta_{i}^* \leq 1$, i.e., $\nu_i=0$, then assuming that $\lambda>0$, i.e., the battery is depleted by the end of time slot $N$, and replacing $c_i$ with $\frac{\theta_{i}}{2}\log\left(1+h_i p_i\right)$ in (\ref{dp:2}), we get
\begin{eqnarray}\label{e:1}
\ln2 \log(1+h_i p_i) \left(\frac{1}{h_i}+p_i\right)=(\epsilon_p + p_i)+ \frac{\psi_i}{\lambda}.
\end{eqnarray}
When $0< \theta_{i}^* < 1$, i.e., $\psi_i=0$, we obtain the same results as in \cite[Eq. (4)]{elza}. Therefore, as argued in \cite{elza}, Equation (\ref{e:1}) has a unique solution which depends only on the channel gain and the processing cost. We denote the solution of (\ref{e:1}) by $p_i^*=v_{p,i}$. When $\theta_{i}^* = 0$, i.e., $\psi_i \geq 0$, it can be argued from (\ref{e:1}) that the optimal transmission power satisfies $p_i^* \geq v_{p,i}$. Note that when $\lambda=0$, i.e., the battery may not be depleted by the end of time slot $N$, we can restrict the optimal power allocation to the above solution without loss of optimality.
\end{itemize}

Next, we study the optimal power and distortion allocation for the strict delay constraint, $d=1$. The optimization problem can be formulated by replacing the constraint (\ref{ppp:4b}) by (\ref{p:8}), and inserting an additional constraint $0\leq \theta_{i} \leq 1$. Solving the optimization problem, we obtain the optimal power allocation as follows:
\begin{eqnarray}\label{dd:16}
p_i^*=\frac{\sigma^{\frac{2}{\theta_{i}+1}}}{h_i^{\frac{\theta_{i}}{\theta_{i}+1}}}\left[\min\left\{\frac{2^{\frac{2B_{max}}{\theta_i}}}{(\sigma_i \sqrt{h_i})^{\frac{2}{1+\theta_i}}},\frac{1}{\lambda^{\frac{1}{1 +\theta_{i}}}}\right\}-\frac{1}{(\sigma_i \sqrt{h_i})^{\frac{2}{1+\theta_i}}}\right]^+,
\end{eqnarray}
where $p_i^* \geq v_{p,i}^*$. The optimal transmission duration $\theta_{i}$ satisfies the properties obtained for general delay constraint. Therefore, the optimal transmission power can be further simplified as follows:
\begin{eqnarray}\label{dd:17}
p_i^*=\left\{
\begin{array}{c l}
\frac{\sigma_i}{\sqrt{h_i}}\left[\min\left\{\frac{2^{2B_{max}}}{\sigma_i \sqrt{h_i}},\frac{1}{\sqrt{\lambda}}\right\}-\frac{1}{\sigma_i \sqrt{h_i}}\right]^+, &  ~~~\text{if }  \theta_{i}=1 , \\
v_{p,i}^*, &  ~~~\text{if } 0< \theta_{i}<1,\\
0,  &  ~~~\text{if } \theta_{i}=0.
\end{array}\right.
\end{eqnarray}

Similarly, we can argue that the optimal distortion is given as follows:
\begin{eqnarray}\label{d:5}
D_i^*=\left\{
\begin{array}{c l}
\sigma_i^2 2^{2B_{max}}, &  ~~~\text{if }  \xi_i \leq \sigma_i^2 2^{2B_{max}} \text{ and } 0<\theta_{i}, \\
\xi_i, &  ~~~\text{if } \sigma_i^2 2^{2B_{max}}<\xi_i< \sigma_i^2 \text{ and } 0< \theta_{i},\\
\sigma_i^2,  &  ~~~\text{if }  \xi_i \geq \sigma_i^2  \text{ or } \theta_{i}=0,
\end{array}\right.
\end{eqnarray}
where $\xi_i=\sigma_i^{\frac{2}{\theta_{i}+1}} \left(\frac{\lambda}{h_i}\right)^{\frac{\theta_{i}}{\theta_{i}+1}}$.

Note that for the strict delay constraint case, i.e., $d=1$, $\theta_i$ can be interpreted as the number of channel uses per source sample, or the channel-source bandwidth ratio for the source-channel pair in time slot $i$.

\subsection{Distortion Minimization with Sampling Cost}\label{sampling}
In this section, we consider sampling energy cost in addition to transmission energy. For ease of exposure, we assume a battery operated system and ignore the processing cost, i.e., $\epsilon_p=0$. Because of sampling cost, collecting all source samples may not be optimal. Hence, we assume that the sensor collects $\phi_{i}$ fraction of the samples with energy cost of $\epsilon_s$ Joules per sample. We also assume that the sampling cost is independent of the sampling rate \cite{osvaldo}. The distortion of source $i$ is now given by $D_i=\sigma_i^2(1-\phi_{i}) + \sigma_i^2 \phi_{i} 2^{-\frac{2r_i}{\phi_{i}}}$, where $r_i$ is the compression rate for the samples collected in time slot $i$. Therefore, we can obtain the corresponding optimization problem by replacing the objective function in (\ref{p:2}) with $ \sum_{i=1}^{N}{\sigma_i^2(1-\phi_{i}) + \sigma_i^2 \phi_{i} 2^{-\frac{2r_i}{\phi_{i}}}}$ and the constraint in (\ref{p:2b}) with the following energy constraint:
\begin{align}\label{p:9}
\sum_{i=1}^{N} \phi_{i}\epsilon_s + \frac{2^{2c_i}-1}{h_i} \leq E,
\end{align}
where $0 \leq \phi_{i}\leq 1$.

Accordingly, the Lagrangian of (\ref{p:2b}) with $\lambda \geq 0$, $\gamma_i\geq 0$, $\delta_{i,k}\geq 0$, $\zeta_{i,k}\geq 0$, $\beta_i \geq 0$, $\rho_i \geq 0$, $\mu_i \geq 0$, $\eta_i \geq 0$, and $\omega_i \geq 0$ as Lagrange multipliers can be written as follows:
\begin{align}\label{ll:1}
\mathcal{L} & =\sum_{i=1}^{N}{\sigma_i^2(1-\phi_{i}) + \sigma_i^2 \phi_{i} 2^{-\frac{2r_i}{\phi_{i}}}} +\lambda \left(\sum_{i=1}^{N} \phi_{i}\epsilon_s + \frac{2^{2c_i}-1}{h_i} - E \right)\nonumber \\
&+\sum_{i=1}^{N}{\gamma_i \left( \sum_{j=i}^{N}{r_j}-\sum_{j=i}^{N}{c_j}\right)} +\sum_{k=1}^{N-d}\sum_{i=k}^{N-d}{\delta_{i,k} \left(\sum_{j=k}^{i}{r_j}- \sum_{j=k}^{i+d-1}{c_j}\right)}\nonumber\\
&+\sum_{k=1}^{N-1}\sum_{i=k}^{N-1}{\zeta_{i,k} \left(\sum_{j=k}^{i+1}{r_j}- \sum_{j=k}^{i}{c_j}-B_{max}\right)}\nonumber\\
& -\sum_{i=1}^N \beta_i r_i +\sum_{i=1}^N \rho_i (r_i-B_{max}) -\sum_{i=1}^{N} \mu_i c_i -\sum_{i=1}^{N} \eta_i \phi_{i} + \sum_{i=1}^N \omega_i (\phi_{i}-1).
\end{align}

When we take the derivative of the Lagrangian with respect to $c_i$, we obtain the optimal transmission power as given in (\ref{ds:2}). Therefore, the properties provided in Lemma \ref{lemma 3} are also valid in this case. However, when we differentiate the Lagrangian with respect to $r_i$ and $\phi_{i}$, we obtain
\begin{eqnarray}\label{dd:1}
\frac{\partial \mathcal{L}}{\partial r_{i}} = -2 (\ln2) \sigma_i^2 2^{-\frac{2r_i}{\phi_{i}}} + \sum_{j=1}^{i}{\gamma_j} + \sum_{k=1}^{i}\sum_{j=i}^{N-d}{\delta_{j,k}} +\sum_{k=1}^{i}\sum_{j=i-1}^{N-1}{\zeta_{j,k}} - \beta_i +\rho_i = 0, ~ \forall i,
\end{eqnarray}
where $\zeta_{i-1,i}=0$ for $\forall i$, and
\begin{eqnarray}\label{dd:3}
\frac{\partial \mathcal{L}}{\partial \phi_{i}}= -\sigma_i^2 + \sigma_i^2 2^{-\frac{2r_i}{\phi_{i}}} +  \frac{2 (\ln2) \sigma_i^2 r_i}{\phi_{i}}2^{-\frac{2r_i}{\phi_{i}}} +\lambda \epsilon_s -\eta_i + \omega_i=0, \quad \forall i,
\end{eqnarray}
respectively.

Combining (\ref{dd:1}) with $D_i=\sigma_i^2(1-\phi_{i}) + \sigma_i^2 \phi_{i} 2^{-\frac{2r_i}{\phi_{i}}}$ we obtain the optimal distortion for source $i$ as follows:
\begin{eqnarray}\label{d:3}
D_i^*=\left\{
\begin{array}{c l}
\sigma_i^2(1-\phi_{i}) + \sigma_i^2 \phi_{i} 2^{-\frac{2B_{max}}{\phi_{i}}}, &  ~~~\text{if }  \xi_i \leq \sigma_i^2 2^{-\frac{2B_{max}}{\phi_{i}}} \text{ and } \phi_{i}>0 , \\
\sigma_i^2 \left(1-\phi_{i}\right)+ \phi_{i} \xi_i, &  ~~~\text{if }\sigma_i^2 2^{-\frac{2B_{max}}{\phi_{i}}} <\xi_i< \sigma_i^2 \text{ and } \phi_{i}>0,\\
\sigma_i^2,  &  ~~~\text{if }  \xi_i\geq \sigma_i^2  \text{ or } \phi_{i}=0,
\end{array}\right.
\end{eqnarray}
where $\xi_i$ is equal to (\ref{sol 2}). Therefore, $\xi_i$ in (\ref{d:3}) satisfies the properties given in Lemma \ref{lemma 1}. From (\ref{dd:1}) we can argue that $\xi_i= \sigma_i^2 2^{-\frac{2r_i}{\phi_{i}}}$, and from (\ref{dd:3}) we obtain:
\begin{eqnarray}\label{d:4}
\frac{\lambda \epsilon_s -\eta_i + \omega_i}{\sigma_i^2 }=1- 2^{-2k_i} -2k_i 2^{-2k_i},
\end{eqnarray}
where $k_i \triangleq \frac{r_i}{\phi_{i}}$. We can interpret $k_i$ as the compression rate for the sampled $\phi_i$ fraction of source $i$. Note that right hand side (RHS) of (\ref{d:4}) is a monotonically increasing function of $k_i$. When $0< \phi_{i}<1$, i.e., $\eta_i=0$ and $\phi_i=0$, there is a unique solution of (\ref{d:4}), which is denoted as $k_i^*=v_{s,i}$, for given $\lambda$, $\epsilon_s$, and $\sigma_i^2$. In addition, we can argue that whereas $\xi_i$ decreases as source variance $\sigma_i^2$ increases, it increases as the sampling cost increases. When $\phi_{i}=1$, i.e., $\phi_i \geq 0$, the solution of (\ref{d:4}) must satisfy $k_i^* \geq v_{s,i}$.

Next, we investigate the effect of sampling cost on the optimal power and distortion allocation in the strict delay constrained case. For $d=1$, the optimization problem can be formulated by replacing the constraint in (\ref{ppp:4b}) with (\ref{p:9}), and inserting an additional constraint $0 \leq  \phi_{i} \leq 1$. With the new objective function $\sum_{i=1}^{N}{\sigma_i^2(1-\phi_{i}) + \sigma_i^2 \phi_{i} 2^{-\frac{2c_i}{\phi_{i}}}}$, the Lagrangian of the optimization problem be can written as
\begin{align}\label{ll:2}
\mathcal{L} & =\sum_{i=1}^{N}{\sigma_i^2(1-\phi_{i}) + \sigma_i^2 \phi_{i} 2^{-\frac{2c_i}{\phi_{i}}}} +\lambda \sum_{i=1}^{N} \phi_{i}\epsilon_s + \frac{2^{\frac{2c_i}{\theta_{i}}}-1}{h_i}- E \nonumber \\
& -\sum_{i=1}^N \beta_i c_i +\sum_{i=1}^{N} \mu_i (c_i-B_{max}) -\sum_{i=1}^{N} \eta_i \phi_{i} + \sum_{i=1}^N \omega_i (\phi_{i}-1),
\end{align}
where $\lambda \geq 0$, $\beta_i\geq 0$, $\mu_i\geq 0$, $\eta_i\geq 0$, and $\omega_i\geq 0$ are Lagrange multipliers.
Differentiating the Lagrangian with respect to $c_i$ we obtain
\begin{eqnarray}\label{dd:5}
\frac{\partial \mathcal{L}}{\partial c_{i}} = - 2 (\ln2) \sigma_i^2 2^{-\frac{2c_i}{\phi_{i}}} + \frac{2 (\ln2)\lambda}{h_i} 2^{2c_i} - \beta_i + \mu_i = 0, ~ \forall i.
\end{eqnarray}
In addition, when we differentiate the Lagrangian with respect to $\phi_{i}$, we get (\ref{dd:3}).

Replacing $c_i$ in (\ref{dd:5}) with $\frac{1}{2}\log\left(1+h_i p_i\right)$, we can argue that the optimal power allocation is given by
\begin{eqnarray}\label{dd:12}
p_i^*=\frac{\sigma_i^{\frac{2\phi_{i}}{\phi_{i}+1}}}{h_i^{\frac{1}{1+\phi_{i}}}}\left[\min\left\{\frac{2^{2B_{max}}}{(\sigma_i \sqrt{h_i})^{\frac{2\theta_i}{1+\theta_i}}},\frac{1}{\lambda^{\frac{\phi_{i}}{1 +\phi_{i}}}}\right\}-\frac{1}{(\sigma_i \sqrt{h_i})^{\frac{2\theta_i}{1+\theta_i}}}\right]^+.
\end{eqnarray}

Combining (\ref{dd:5}) and (\ref{dd:3}) such that $\lambda$ is eliminated, we obtain
\begin{eqnarray}\label{dd:8}
-\sigma_i^2 + \sigma_i^2 2^{-\frac{2c_i}{\phi_{i}}} +  \frac{2 (\ln2)  \sigma_i^2 c_i}{\phi_{i}}2^{-\frac{2c_i}{\phi_{i}}} +\epsilon_s h_i \sigma_i^2 2^{-\frac{2c_i}{\phi_{i}}} 2^{-2c_i}+\beta_i - \mu_i - \eta_i + \omega_i= 0.
\end{eqnarray}
We can further simplify (\ref{dd:8}) as follows.
\begin{eqnarray}\label{dd:9}
\frac{\epsilon_s}{\frac{1}{h_i}+p_i} + \beta_i - \mu_i - \eta_i + \omega_i  =  2^{2k_i}-2(\ln2) k_i-1,
\end{eqnarray}
where $k_i=\frac{c_i}{\phi_{i}}$. Using (\ref{dd:9}), we can argue the following:
\begin{itemize}
\item If $\phi_{i}=0$ or $c_i=0$, then $p_i=0$ and $D_i=0$.
\item If $0<\phi_{i}<1$ and $0<c_i<B_{max}$, then RHS of (\ref{dd:8}) is monotonically increasing function of $k_i$, therefore Equation (\ref{dd:8}) has a unique solution $k_i^*=v_{s,i}$ for a given $\epsilon_s$, $h_i$, and $p_i$. When $h_i$ and $p_i$ are given, $c_i=\frac{1}{2}\log\left(1+h_i p_i\right)$ is known as well; and hence, we can compute the optimal sampling fraction $\phi_{i}$. Then the optimal distortion $D_i$ is given by  $D_i=\sigma_i^2(1-\phi_{i})+\sigma_i^2 \phi_{i} 2^{-2k_i}$.
\item If $\phi_{i}=1$ and $0<c_i<B_{max}$, then $\omega_i\geq 0$, therefore  from (\ref{dd:9}), we can argue that the optimal solution $k_i^*$ must satisfy $k_i^* \geq v_{s,i}$. Then, the optimal distortion  $D_i$ is given by  $D_i=\sigma_i^2(1-\phi_{i})+\sigma_i^2 \phi_{i} 2^{-2k_i}$.
\end{itemize}

\section{Illustration of the Results}\label{result}
In this section, we provide numerical results to illustrate the structure of the optimal distortion and power allocation, and to analyze the impact of the delay constraint, energy harvesting, processing and sampling costs on the optimum sum distortion. Throughout this section, we consider $N=10$ time slots. The channel gains are chosen as $\mathbf{h}=[0.4, 0.2, 0.2, 0.5, 0.4,$ $0.6, 0.9, 0.3, 0.4, 1]$, and the source variances are $\mathbf{\sigma^2}=[0.7, 0.6,$ $1, 0.5, 0.3,$ $0.6, 0.2, 0.3, 0.7, 0.5]$. We first set $d=1$ and consider a battery-run system with initial energy $E=4$ Joules. We set $\epsilon_p=\epsilon_s=0$. We illustrate the optimal rate and power allocation for $B_{max}=0.15$ bits in Fig. \ref{f:E1}. In the figure, the dashed line corresponds to $K_i 2^{2B_{max}}$. As shown in Fig. \ref{f:E1}, the data buffer size bounds the total sampled data in each time slot and the minimum distortion. The sum achievable distortion is computed as $D=4.57$. The optimal power and distortion allocation are $\mathbf{p^*}=[0.57, 0.23, 1.15, 0.46, 0.11,$ $0.38, 0.25, 0, 0.5, 0.23]$ W and $\mathbf{D^*}=[0.56, 0.57, 0.81, 0.40, 0.28, 0.48,$ $0.16, 0.3, 0.56, 0.4]$, respectively.

\begin{figure}[ht]
\center
\includegraphics[width=4in,trim= 0 0 0 0]{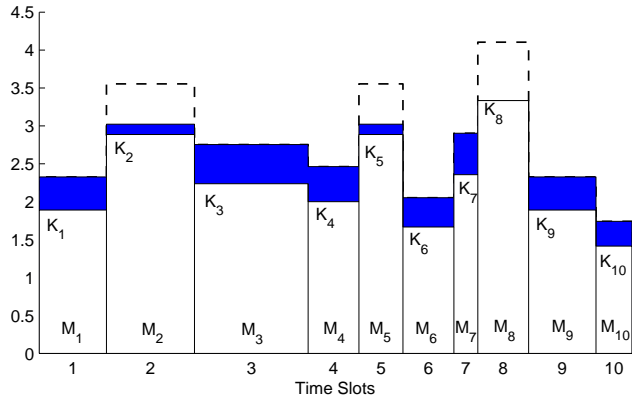}
\caption{2D waterfilling for a battery-run system. $E=4$ Joules, $B_{max}=0.15$ bits per sample, $\epsilon_p=\epsilon_s=0$, $\mathbf{h}=[0.4, 0.2, 0.2, 0.5,$ $0.4, 0.6, 0.9, 0.3, 0.4, 1]$, $\mathbf{\sigma^2}=[0.7, 0.6, 1, 0.5, 0.3, 0.6, 0.2, 0.3, 0.7, 0.5]$, $\mathbf{p^*}=[0.57, 0.23, 1.15, 0.46, 0.11, 0.38, 0.25, 0, 0.5, 0.23]$ W, and $\mathbf{D^*}=[0.56, 0.57, 0.81, 0.40, 0.28, 0.48, 0.16, 0.3, 0.56, 0.4]$.} \label{f:E1}
\end{figure}

Next, we provide the optimal rate and power allocation for the infinite data buffer size. We assume the same channel gains and source variances as given above. The 2D waterfilling solution is shown in Fig. \ref{f:D1}, resulting in the optimal total distortion $D=4.48$. The optimal power and distortion allocation are $\mathbf{p^*}=[0.74, 0, 0.48, 0.45, 0,  0.78, 0.04, 0, 0.74, 0.73]$ W and $\mathbf{D^*}=[0.53, 0.6, 0.9, 0.4, 0.3, 0.4, 0.19, 0.3, 0.53, 0.28]$, respectively.
\begin{figure}[ht]
\center
\includegraphics[width=4in,trim= 0 0 0 0]{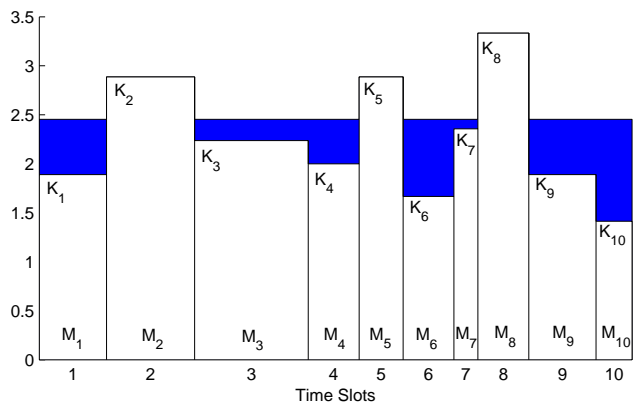}
\caption{2D waterfilling for battery-run system. $E=4$ Joules, $B_{max}\rightarrow \infty$, $\epsilon_p=\epsilon_s=0$, $\mathbf{h}=[0.4, 0.2, 0.2, 0.5,$ $0.4, 0.6, 0.9, 0.3, 0.4, 1]$, $\mathbf{\sigma^2}=[0.7, 0.6, 1, 0.5, 0.3, 0.6, 0.2, 0.3, 0.7, 0.5]$, $\mathbf{p^*}=[0.74, 0, 0.48, 0.45, 0,  0.78, 0.04, 0, 0.74, 0.73]$ W and $\mathbf{D^*}=[0.53, 0.6, 0.9, 0.4, 0.3, 0.4, 0.19, 0.3, 0.53, 0.28]$.} \label{f:D1}
\end{figure}

We illustrate the optimal distortion with respect to $B_{max}$ in Fig. \ref{f:E2}. We assume the same channel gains and source variances as before, and set $E=4$ Joules and $\epsilon_p=\epsilon_s=0$. As shown in Fig. \ref{f:E2}, the distortion decreases dramatically when the data buffer size is large. As expected, the distortion, when the delay constraint is $d=1$, is larger than the case when $d=N$. The figure also shows that the data buffer size has more impact on the distortion when the delay constraint is more relaxed. This is because a relaxed delay constraint allows more flexibility in terms of rate allocation, but this flexibility can be exploited only with a sufficiently large data buffer. In addition, distortion remains constant when the data buffer size $B_{max}\geq 0.31$ for $d=1$, and when $B_{max}\geq 1.12$ for $d=10$.
\begin{figure}[ht]
\center
\includegraphics[width=4in,trim= 0 0 0 0]{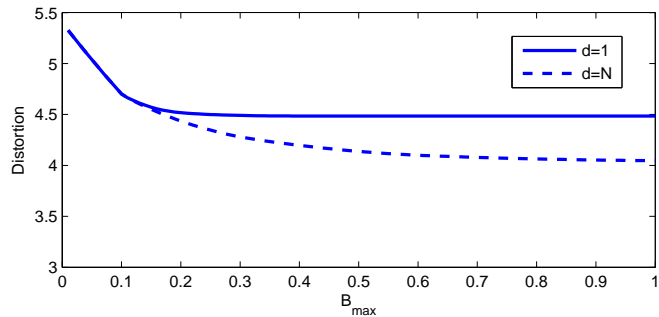}
\caption{Distortion versus buffer size. $E=4$ Joules, $\epsilon_p=\epsilon_s=0$, $\mathbf{h}=[0.4, 0.2, 0.2, 0.5,$ $0.4, 0.6, 0.9, 0.3, 0.4, 1]$, $\mathbf{\sigma^2}=[0.7, 0.6, 1, 0.5, 0.3, 0.6, 0.2, 0.3, 0.7, 0.5]$. } \label{f:E2}
\end{figure}

We investigate the variation of the optimal distortion $D$ with respect to the delay constraint $d$ in Fig. \ref{f:DD}. We consider a battery-run system with initial energy $E=4$ Joules and $\epsilon_p=\epsilon_s=0$. The optimal distortion values for increasing $d$ plotted in Fig. \ref{f:DD} show that the optimal distortion decreases monotonically for $d\leq 4$ and remains constant afterwards when $B_{max}=\infty$. However, when the data buffer size is limited to $B_{max}=0.15$ bits per sample, relaxing the delay constraint beyond two time slots does not decrease the minimum achievable distortion.
\begin{figure}[ht]
\center
\includegraphics[width=4in,trim= 0 0 0 0]{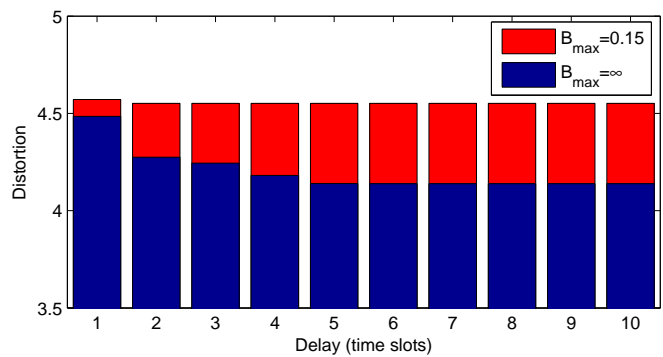}
\caption{Total distortion $D$ versus delay constraint $d$. $E=4$ Joules, $B_{max}=0.15$ bits per sample, $\epsilon_p=\epsilon_s=0$, $\mathbf{h}=[0.4, 0.2, 0.2, 0.5,$ $0.4, 0.6, 0.9, 0.3, 0.4, 1]$, $\mathbf{\sigma^2}=[0.7, 0.6, 1, 0.5, 0.3, 0.6, 0.2, 0.3, 0.7, 0.5]$. } \label{f:DD}
\end{figure}

We also investigate the variation of the optimal distortion $D$ with respect to the available energy. We consider a battery-run system with initial energy $E \in [0,10]$ Joules and $\epsilon_p=\epsilon_s=0$. We assume that $B_{max}=0.15$. As it can be seen from Fig. \ref{f:FF}, the achievable distortion decays with the available total energy, and for very low and very high energy levels, the minimum achievable distortion values are the same for $d=1$ and $d=N$. Since the allocated energy to each time slot is partly limited by the data buffer constraint,  when the available energy in the battery is large, all the samples of source $i$ can be transmitted within time slot $i$, and hence, relaxing the delay constraint does not decrease the minimum achievable distortion.
\begin{figure}[ht]
\center
\includegraphics[width=4in,trim= 0 0 0 0]{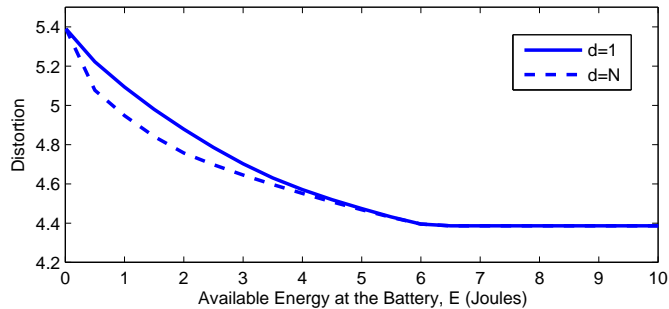}
\caption{Total distortion $D$ versus available energy. $E=4$ Joules, $\epsilon_p=\epsilon_s=0$, $\mathbf{h}=[0.4, 0.2, 0.2, 0.5,$ $0.4, 0.6, 0.9, 0.3, 0.4, 1]$, $\mathbf{\sigma^2}=[0.7, 0.6, 1, 0.5, 0.3, 0.6, 0.2, 0.3, 0.7, 0.5]$.} \label{f:FF}
\end{figure}

Next, we consider an energy harvesting system with energy packets of sizes $E_1=1, E_6=3, E_i=0$ otherwise. We set $\epsilon_p=\epsilon_s=0$ and $B_{max}=\rightarrow \infty$ bits per sample. The 2D directional waterfilling solution for infinite data buffer size is given in Fig. \ref{f:D2}. Note that the water level changes after time slot five because of directional waterfilling. The resulting optimal distortion is $D=4.50$, larger than the battery-run system with the same total energy (see Fig. \ref{f:D1}), since the battery-run system has more flexibility in allocating the available energy over time. The optimal power and distortion allocations are $\mathbf{p^*}=[0.54, 0, 0.15, 0.3, 0, 0.98, 0.13, 0, 1, 0.87]$ W and $\mathbf{D^*}=[0.57, 0.6, 0.97,   0.43, 0.3, 0.37, 0.17, 0.29, 0.49, 0.26]$, respectively.

\begin{figure}[ht]
\center
\includegraphics[width=4in,trim= 0 0 0 0]{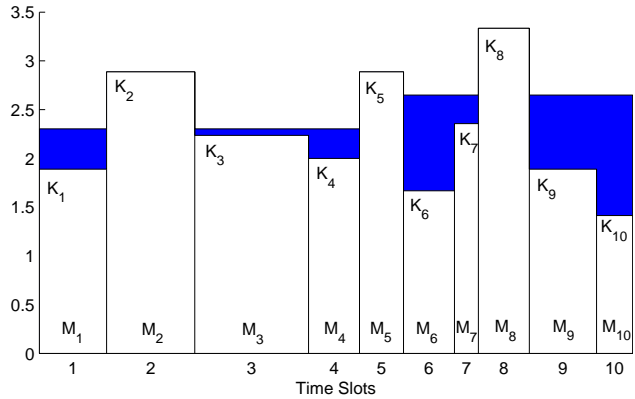}
\caption{2D directional waterfilling for an EH system. $E_1=1$, $E_6=3$, $E_i=0$ Joules, $B_{max}\rightarrow \infty$, $\epsilon_p=\epsilon_s=0$, $\mathbf{h}=[0.4, 0.2, 0.2, 0.5,$ $0.4, 0.6, 0.9, 0.3, 0.4, 1]$, $\mathbf{\sigma^2}=[0.7, 0.6, 1, 0.5, 0.3, 0.6, 0.2, 0.3, 0.7, 0.5]$, $\mathbf{p^*}=[0.54, 0, 0.15, 0.3, 0, 0.98, 0.13, 0, 1, 0.87]$ W and $\mathbf{D^*}=[0.57, 0.6, 0.97,   0.43, 0.3, 0.37, 0.17, 0.29, 0.49, 0.26]$.} \label{f:D2}
\end{figure}

The effect of the processing cost on the minimum distortion for a battery-run system is illustrated in Fig. \ref{f:HH}. We set $E=4$ Joules and $\epsilon_s=0$. As seen in the figure, when the data buffer constraint is 0.1 bits per sample and the processing cost is low, the minimum achievable distortion is the same for the delay constrained and unconstrained scenarios. However, as the processing cost increases system without delay constraint performs better than the strict delay constrained case. In addition, when the data buffer size is relaxed, the performance without a delay constraint significantly improves. However, when the processing cost is high, relaxing the data buffer size does not decrease the total distortion because high processing cost limits the compression rate.
\begin{figure}[ht]
\center
\includegraphics[width=4in,trim= 0 0 0 0]{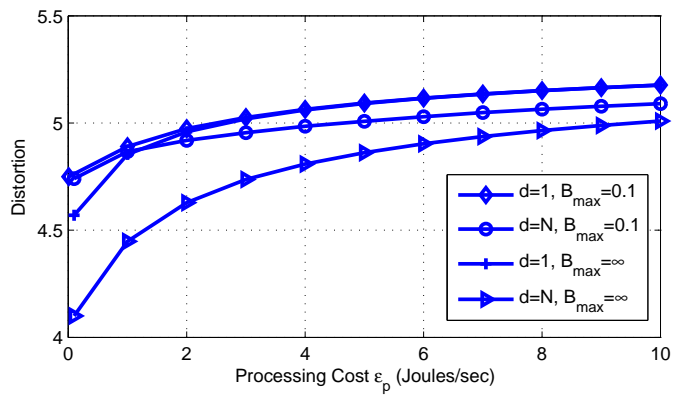}
\caption{Total distortion $D$ versus processing energy cost for a battery-run system. $E=4$ Joules, $B_{max}=0.1$ bits per sample, $\epsilon_s=0$, $\mathbf{h}=[0.4, 0.2, 0.2, 0.5,$ $0.4, 0.6, 0.9, 0.3, 0.4, 1]$, $\mathbf{\sigma^2}=[0.7, 0.6, 1, 0.5, 0.3, 0.6, 0.2, 0.3, 0.7, 0.5]$.} \label{f:HH}
\end{figure}

Finally, we consider the effect of the sampling cost on the minimum distortion for a battery-run system illustrated in Fig. \ref{f:GG}. We set $E=4$ Joules and $\epsilon_p=0$. As seen in the figure, when the sampling cost is low, the effect of the limited data buffer on the sum achievable distortion is more significant. However, when we increase the sampling cost, the performance of the system is mostly determined by the delay constraint. As it can be seen from Fig. \ref{f:HH} and Fig. \ref{f:GG}, the behavior of the distortion with respect to sampling cost is similar to that of the processing cost.
\begin{figure}[ht]
\center
\includegraphics[width=4in,trim= 0 0 0 0]{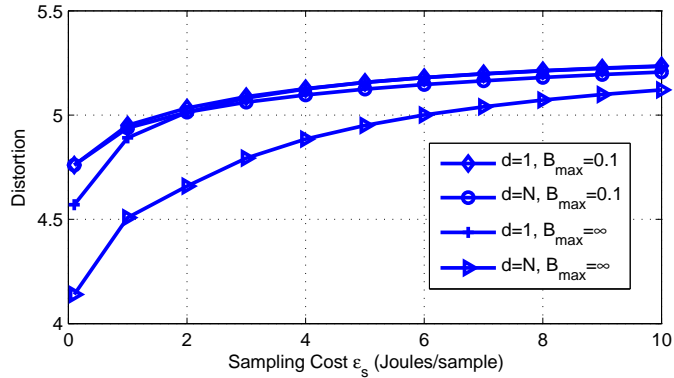}
\caption{Total distortion $D$ versus sampling energy cost for a battery-run system. $E=4$ Joules, $B_{max}=0.1$ bits per sample, $\epsilon_p=0$, $\mathbf{h}=[0.4, 0.2, 0.2, 0.5,$ $0.4, 0.6, 0.9, 0.3, 0.4, 1]$, $\mathbf{\sigma^2}=[0.7, 0.6, 1, 0.5, 0.3, 0.6, 0.2, 0.3, 0.7, 0.5]$.} \label{f:GG}
\end{figure}

\section{Conclusions} \label{s:conc}
We have investigated source-channel coding for a wireless sensor node under delay, data buffer size and various energy constraints. For a time slotted system, we have considered the scenario in which the samples of a time varying Gaussian source are to be delivered to a destination over a fading channel within $d$ time slots. In addition, we have imposed a finite size data buffer on the compressed samples. In this framework, we have investigated optimal transmission policies that minimize the total mean squared distortion of the samples at the destination for battery operated as well as an energy harvesting system. We have also studied the impact of various additional energy costs, including processing and sampling costs. In each case, we have provided a convex optimization formulation and identified the characteristics of the optimal distortion and power levels. We have also provided numerical results to investigate the impact of energy harvesting, processing and sampling costs. Our results have shown that for an energy harvesting transmitter energy arrivals over time may result in higher average distortion at the destination. In addition, we have observed that relaxing the delay and data buffer constraints induce more dramatic increase in the average distortion when processing and sampling costs are low. These results have important implications for the design of energy-limited wireless sensor nodes, and indicate that the optimal system operation and performance can be significantly different when the energy consumption of various other system components, or the arrival of the energy over time are taken into consideration.

\appendix{}\label{appendix}
In this appendix, we illustrate Fourier-Motzkin elimination of (\ref{rate 1})-(\ref{rate 3}) for three time slots $N=3$ when delay constraint is $d=2$. Rewriting (\ref{rate 1})-(\ref{rate 3}) in terms of $r_i \triangleq \frac{1}{2}\log\left(\frac{\sigma_i^2}{D_i}\right)$ and $c_i \triangleq \frac{1}{2}\log\left(1+h_i p_i\right)$ we get
\begin{eqnarray}
R_{1,1} &\leq & c_1 \nonumber \\
R_{1,2} + R_{2,2} &\leq & c_2 \nonumber \\
R_{2,3} + R_{3,3} &\leq &c_3 \nonumber \\
r_1 &\leq& R_{1,1} + R_{1,2} \nonumber \\
r_2 &\leq& R_{2,2} + R_{2,3} \nonumber \\
r_3 &\leq& R_{3,3} \nonumber \\
R_{1,1} +R_{1,2} &\leq & B_{max} \nonumber \\
R_{1,2} +R_{2,2} +R_{2,3} &\leq &B_{max} \nonumber \\
R_{2,3} +R_{3,3} &\leq & B_{max}, \nonumber
\end{eqnarray}
where $R_{1,1} \geq 0$, $R_{1,2} \geq 0$, $R_{2,2} \geq 0$, $R_{2,3} \geq 0$, $R_{3,3} \geq 0$, $r_i \geq 0$, and $c_i \geq 0$.

We have upper and lower bounds on $R_{1,1}$ as $\max\{0,r_1 -R_{1,2}\} \leq R_{1,1} \leq \min\{c_1, B_{max}-R_{1,2}\}$. Therefore, eliminating $R_{1,1}$ and the redundant inequalities, we obtain:
\begin{eqnarray}
r_1 &\leq & c_1 + R_{1,2} \nonumber \\
R_{1,2} + R_{2,2} &\leq & c_2 \nonumber \\
R_{2,3} + R_{3,3} &\leq &c_3 \nonumber \\
r_2 &\leq& R_{2,2} + R_{2,3} \nonumber \\
r_3 &\leq& R_{3,3} \nonumber \\
r_1 &\leq & B_{max} \nonumber \\
R_{1,2} +R_{2,2} +R_{2,3} &\leq &B_{max} \nonumber \\
R_{2,3} +R_{3,3} &\leq & B_{max} \nonumber
\end{eqnarray}

The upper and lower bounds on $R_{1,2}$ are $\max\{0,r_1 -c_1\} \leq R_{1,2} \leq \min\{c_2-R_{2,2}, B_{max}- R_{2,2}-R_{2,3}\}$. Therefore, eliminating $R_{1,2}$ and the redundant inequalities, we obtain:
\begin{eqnarray}
r_1 +  R_{2,2} &\leq & c_1 +c_2  \nonumber \\
R_{2,2} &\leq & c_2 \nonumber \\
R_{2,3} + R_{3,3} &\leq & c_3 \nonumber \\
r_2 &\leq& R_{2,2} + R_{2,3} \nonumber \\
r_3 &\leq& R_{3,3} \nonumber \\
r_1 +  R_{2,2} + R_{2,3}&\leq & c_1 +B_{max}  \nonumber \\
r_1 &\leq & B_{max} \nonumber \\
R_{2,2}+R_{2,3} &\leq & B_{max} \nonumber \\
R_{2,3} +R_{3,3} &\leq & B_{max} \nonumber
\end{eqnarray}

The upper and lower bounds on $R_{2,2}$ are $\max\{0,r_2 -R_{2,3}\} \leq R_{2,2} \leq \min\{c_2, c_1 + c_2-r_1, B_{max}- R_{2,3}, c_1+ B_{max}-r_1 - R_{2,3}\}$. Eliminating $R_{2,2}$ and the redundant inequalities,  we obtain:
\begin{eqnarray}
r_1 &\leq & c_1+  c_2  \nonumber \\
r_2  &\leq & c_2 + R_{2,3} \nonumber \\
r_1 +r_2 &\leq & c_1+  c_2  + R_{2,3} \nonumber \\
R_{2,3} + R_{3,3} &\leq & c_3 \nonumber \\
r_3 &\leq& R_{3,3} \nonumber \\
r_i &\leq & B_{max}, \quad i=1,2 \nonumber \\
r_1 + R_{2,3} &\leq & B_{max} +c_1 \nonumber \\
r_1 + r_2  &\leq & B_{max} +c_1 \nonumber \\
R_{2,3} +R_{3,3} &\leq & B_{max} \nonumber
\end{eqnarray}

The upper and lower bounds on $R_{2,3}$ are $\max\{0,r_2 -c_2, r_1+ r_2 -c_1 -c_2\} \leq R_{2,3} \leq \min\{B_{max} + c_1 -r_1, c_3 - R_{3,3}, B_{max}-R_{3,3}\}$. Eliminating $R_{2,3}$ and the redundant inequalities, we obtain:
\vspace{-0.4in}
\begin{eqnarray}
R_{3,3} + r_1+r_2 &\leq &  c_3+c_2 +c_1\nonumber \\
r_1 &\leq & c_1+  c_2  \nonumber \\
R_{3,3} &\leq &  c_3\nonumber \\
R_{3,3} + r_2 &\leq &  c_3+c_2\nonumber \\
r_3 &\leq& R_{3,3} \nonumber\\
r_i &\leq & B_{max}, \quad i=1,2 \nonumber \\
R_{3,3} &\leq &  B_{max}\nonumber \\
R_{3,3} + r_2 &\leq &  B_{max}+c_2\nonumber \\
R_{3,3} +r_1+ r_2 &\leq &  B_{max}+c_2+ c_1\nonumber \\
r_1 + r_2  &\leq & B_{max} +c_1 \nonumber
\end{eqnarray}
\vspace{-0.41in}

Finally, we have upper and lower bounds on $R_{3,3}$ as $\max\{0, r_3\} \leq R_{3,3} \leq \min\{c_3,c_3+c_2-r_2,B_{max}, B_{max}+c_2-r_2, B_{max}+c_1+c_2 - r_1 -r_2, c_3+c_2 +c_1 -r_1-r_2\}$. Eliminating $R_{3,3}$ and the redundant inequalities, we obtain:
\vspace{-0.3in}
\begin{eqnarray}
r_3 &\leq &  c_3\nonumber \\
r_2 +r_3  &\leq &  c_2+c_3 \nonumber \\
r_1+ r_2 +r_3 &\leq & c_1+c_2 +c_3 \nonumber \\
r_1 &\leq & c_1+  c_2  \nonumber \\
r_1 + r_2  &\leq & c_1 + B_{max} \nonumber \\
r_1 + r_2 + r_3  &\leq & c_1 + c_2 + B_{max} \nonumber \\
r_2+r_3  &\leq & c_2 + B_{max} \nonumber \\
r_i &\leq & B_{max}, \quad i=1,2,3. \nonumber
\end{eqnarray}

\vspace{-0.29in}

\end{document}